\def\ps@headings{%
\def\@oddhead{\mbox{}\scriptsize\rightmark \hfil \thepage}%
\def\@evenhead{\scriptsize\thepage \hfil \leftmark\mbox{}}%
\def\@oddfoot{}%
\def\@evenfoot{}}
\newtheorem{theorem}{Theorem}
\newtheorem{lemma}{Lemma}
\newtheorem{proposition}{Proposition}
\theoremstyle{definition}
\begin{document}

\begin{titlepage}
\begin{center}
\vspace*{-2\baselineskip}
\begin{minipage}[l]{7cm}
\flushleft
\includegraphics[width=2 in]{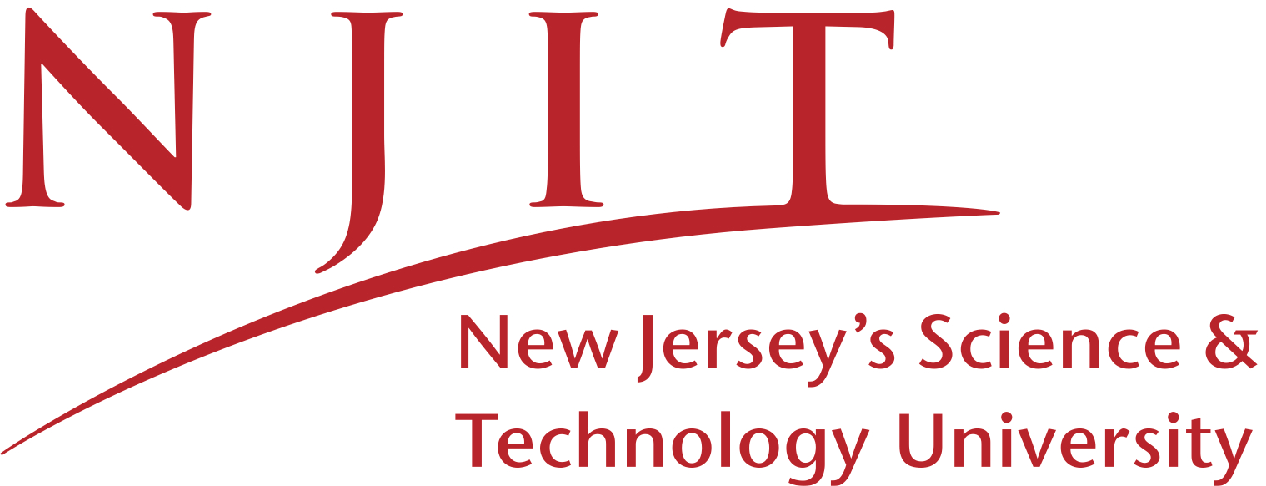}
\end{minipage}
\hfill
\begin{minipage}[r]{7cm}
\flushright
\includegraphics[width=1 in]{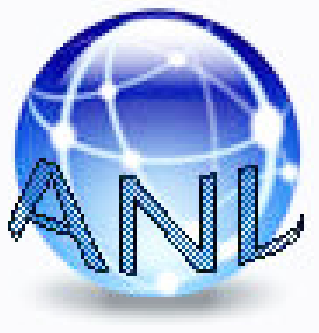}%
\end{minipage}

\vfill

\textsc{\LARGE Provisioning Green Energy for Base Stations in Heterogeneous Networks}\\

\vfill
\textsc{\LARGE Tao Han\\[12pt]
\LARGE NIRWAN ANSARI}\\
\vfill
\textsc{\LARGE TR-ANL-2014-006\\[12pt]
\LARGE Sep. 28, 2014}\\[1.5cm]
\vfill
{ADVANCED NETWORKING LABORATORY\\
 DEPARTMENT OF ELECTRICAL AND COMPUTER ENGINEERING\\
 NEW JERSY INSTITUTE OF TECHNOLOGY}
\end{center}
\end{titlepage}

\title{Provisioning Green Energy for Base Stations in Heterogeneous Networks}
\author{\IEEEauthorblockN{Tao Han, \emph{{Student Member, IEEE}}, and
Nirwan Ansari, \emph{{Fellow, IEEE}}}\\
\IEEEauthorblockA{Department of Electrical and Computer Engineering \\
New Jersey Institute of Technology, Newark, NJ, 07102, USA\\
Email:  \{th36, nirwan.ansari\}@njit.edu}
\thanks{This work was supported in part by NSF under grant no. CNS-1218181 and no. CNS-1320468.}}
\maketitle

\begin{abstract}
Cellular networks are among the major energy hoggers of communication networks, and their contributions to the global energy consumption increase rapidly due to the surges of data traffic. With the development of green energy technologies, base stations (BSs) can be powered by green energy in order to reduce the on-grid energy consumption, and subsequently reduce the carbon footprints. However, equipping a BS with a green energy system incurs additional capital expenditure (CAPEX) that is determined by the size of the green energy generator, the battery capacity, and other installation expenses. In this paper, we introduce and investigate the green energy provisioning (GEP) problem which aims to minimize the CAPEX of deploying green energy systems in BSs while satisfying the QoS requirements of cellular networks. The GEP problem is challenging because it involves the optimization over multiple time slots and across multiple BSs. We decompose the GEP problem into the weighted energy minimization problem and the green energy system sizing problem, and propose a green energy provisioning solution consisting of the provision cost aware traffic load balancing algorithm and the binary energy system sizing algorithm to solve the sub-problems and subsequently solve the GEP problem. We validate the performance and the viability of the proposed green energy provisioning solution through extensive simulations, which also conform to our analytically results.

\end{abstract}
\IEEEpeerreviewmaketitle
\section{Introduction}
\label{sec:introduction}
With the rapid development of radio access techniques and mobile devices, a variety of bandwidth-hungry applications and services such as web browsing, video streaming and social networking are gradually carried through mobile networks, thus leading to an exponential increase of data traffic in cellular networks. The mobile data traffic surges congest cellular networks and degrade the network quality of service (QoS). Heterogeneous networks (HetNets) consisting of both small cell base stations (SBSs) and macro BSs (MBS) are being deployed to offload mobile traffic from MBSs and alleviate the traffic congestion of cellular networks \cite{Yeh:2011:CCE}. SBSs can provide high network capacity for mobile users by capitalizing on their close proximity to mobile users. However, a SBS usually has a limited coverage area. Thus, the number of SBSs will be orders of magnitude larger than that of MBSs for a wide scale network deployment. As a result, the overall energy consumption of cellular networks will keep increasing.

Owing to the direct impact of greenhouse gases on the earth environment and the climate change, there has been a consensus on limiting per-nation $CO_{2}$ emissions \cite{Kyoto:1998}. As a result, governments are likely to regulate the $CO_{2}$ emissions of individual industries in their countries. In this circumstance, mobile service providers may be given a total per-month or per-year energy budgets in terms of $CO_{2}$ emissions \cite{Kwak:2012:GES}. To satisfy the rapidly increasing traffic demands with limited energy budgets, mobile service providers are driven to enhance the energy efficiency of cellular networks.

As green energy technologies advance, green energy such as sustainable biofuels, solar and wind energy can be utilized to power BSs \cite{Han:2014:PMN}. Telecommunication companies such as  Ericsson and Nokia Siemens have designed green energy powered BSs for cellular networks \cite{Ericson:2007:SEU}. By adopting green energy powered SBSs, mobile service providers may save on-grid energy consumption and thus reduce their $CO_{2}$ emissions. For instance, Orange, a french mobile network operator (MNO), has already deployed more than two thousand solar-powered BSs in Africa \cite{Orange:2012:SolarBS}. These BSs serving over 3 million people saved upto 25 million liters of fuel and reduced about 67 million kilogram of $CO_{2}$ emissions in 2011 \cite{Orange:2012:SolarBS}.

Equipping a BSs with a green energy system incurs additional capital expenditures (CAPEX) that are determined by the size of the green energy generator, the battery capacity, and other installation expenses. It is desired to minimize the CAPEX on provisioning green energy while achieving the target QoS. We refer to this problem as the green energy provisioning (GEP) problem. In this paper, we investigate the (GEP) problem. We consider solar energy as the green energy source. Given the per unit cost of the solar panel and the battery capacity, the CAPEX of a BS's green energy system is determined by three variables: the size of the solar panel, the battery capacity, and the cost weight. Here, the cost weight indicates the per unit installation expense of the green energy system on a BS. Given the solar power generation rate and the characteristics of the battery, the size of the solar panel and the battery capacity are determined by the BS's power consumption. Thus, the CAPEX of a BS's green energy system is closely related to the BS's power consumption. A BS's power consumption consists of the static power consumption and the dynamic power consumption \cite{Auer:2011:HME}. The dynamic power consumption is the amount of power consumed for carrying traffic loads. For SBSs, the dynamic power consumption accounts for a very small portion of the total power consumption \cite{Auer:2011:HME}. In other words, the traffic load dependency of SBSs in terms of the power consumption is negligible. Therefore, a SBS's green energy system can be provisioned to ensure the power supplies satisfying the SBS's maximum power demand. Thus, we do not study the green energy provisioning for SBSs.

A MBS's power consumption is, however, highly traffic load dependent. Thus, the power consumption of MBSs can be adjusted by properly balancing traffic loads among BSs. Adapting MBSs' power consumption can change the green energy provision costs and thus reduce the network CAPEX. In order to minimize the network CAPEX, it is desired to reduce the power consumption of the BS which has a large cost weight by optimizing the traffic loads among BSs. Thus, we decompose the GEP problem into two subproblems: the weighted energy minimization (WEM) problem and the green energy system sizing (GESS) problem. A MBS's power consumption depends on its traffic load. Therefore, the WEM problem is solved by designing the provision cost aware traffic load balancing algorithm to optimize the traffic loads among BSs. Given MBSs' traffic loads, the MBSs' energy consumption can be derived. Then, the solar panel size and battery capacity are optimized by solving the GESS problem.

The rest of the paper is organized as follows. In Section \ref{sec:related_works}, we briefly review related works. In Section \ref{sec:sys_model}, we define the system model and formulate the green energy provisioning problem. Section \ref{sec:gep_alg} presents the proposed green energy provisioning solution. Section \ref{sec:simulation} shows the simulation results, and concluding remarks are presented in Section \ref{sec:conclusion}.

\section{Related Works}
\label{sec:related_works}
In this section, we briefly review related works on sizing the green power system and optimizing the green energy utilization in cellular networks.
\subsection{Sizing the green power system}
The process of sizing a green power system involves three basic models: the load model which characterizes energy demands, the battery model which defines the battery capacity and charging characteristics, and the green power generator model which describes the generator capacity \cite{Maghraby:2002:PAP}. Based on these models, three methods can be utilized to determine and evaluate the size of a green power system \cite{Maghraby:2002:PAP}: the loss of load and energy probability method, the fixed autonomy and recharge method, and the Markov chain probabilistic method.
These methods are not applicable to solve the GEP problem because these methods do not optimize the energy demands to minimize the size of green energy system. For the GEP problem, the energy demands of individual MBSs depend on their traffic loads which should be optimized to minimize the network CAPEX. Badawy \emph{et al.} \cite{Badawy:2010:EPS} investigated the energy provisioning problem for solar powered wireless mesh networks, and designed a generic algorithm to incorporate the energy aware routing in the energy provisioning procedure. Although, by incorporating energy aware routing, this method optimizes the energy consumption of wireless nodes, it is designed for wireless mesh networks and cannot be applied to solve the GEP problem. Marsan \emph{et al.} \cite{Marsan:2013:TZG} proposed the concept of zero grid electricity networking in which the cellular networks are powered solely with renewable energy and investigate the problem of dimensioning the power generator capacity and the battery storage. The authors studied the green energy provisioning problem for a single macro BS based on the measurement of the BS power consumption and the renewable energy generation. Our work, however, focuses on optimizing the green energy provision for a collection of BSs in HetNets.

\subsection{Optimizing the green energy utilization}
To optimize the utilization of renewable energy, Ozel \emph{et al.} \cite{Ozel:2011:TEH} proposed to optimize the packet transmission policy for energy harvest wireless nodes. Zhou \emph{et al.} \cite{Zhou:2010:ESA:} proposed the hand over parameter tuning algorithm and the power control algorithm to guide mobile users to access green energy powered BSs. Han and Ansari \cite{Han:2012:ICE} proposed an energy aware cell size adaptation algorithm named ICE, which balances the energy consumption among BSs powered by green energy, and enables more users to be served with green energy. Considering a network with multiple energy supplies, Han and Ansari \cite{Han:2012:OOG} also proposed to optimize the utilization of green energy, and reduce the on-grid energy consumption of cellular networks by the cell size optimization. Assuming the capacity of the green energy system is given, all these solutions are optimizing wireless/cellular networks according to the availability of the green energy. However, for the GEP problem, the capacity of green energy system is to be determined. Therefore, the existing solutions on optimizing the green energy utilization cannot be directly applied to solve the GEP problem.

\section{System Model}
\label{sec:sys_model}
In this paper, we consider a heterogeneous cellular network with multiple MBSs and SBSs. The MBSs are powered by both solar power and grid power while the SBSs are powered by grid power. We focus on optimizing the size of each MBS's green energy system for the downlink transmission. The time horizon is divided into $N$ time slots. In the following analysis, a BS generally refers to a MBS or a SBS.

\subsection{Traffic Model}
We consider the scenario in which MBSs and SBSs are deployed to provide data communications in an area. Denote $\mathcal{B}^{m}$ and $\mathcal{B}^{s}$ as the set of MBSs and SBSs, respectively. We denote $\lambda(x,k)$ and $\nu(x,k)$ as the traffic arrival rate per unit area and the average traffic load at location $x$ in the $k$th time slot, respectively. Here, $\lambda(x,k)$ and $\nu(x,k)$ can be derived based on the statistic traffic data from traffic measurements. For presentation simplicity, we assume there is only one user at location $x$. Assuming that a mobile user at location $x$ is associated with the $j$th BS, then the user's data rate $r_{j}(x)$ can be generally expressed as a logarithmic function of the perceived signal to interference plus noise ratio, $SINR_{j}(x)$, according to the Shannon –Hartley theorem \cite{Kim:2012:DOU},
\begin{equation}
\label{eq:user_rate}
r_{j}(x)=log_{2}(1+SINR_{j}(x)).
\end{equation}
Here,
\begin{equation}
\label{eq:user_SINR}
SINR_{j}(x)=\frac{P_{j}g_{j}(x)}{\sigma^{2}+\sum_{h \in \mathcal{B},h\neq j}P_{h}g_{h}(x)},
\end{equation}
where $P_{j}$ is the transmission power of the $j$th BS, $\sigma^{2}$ denotes the noise power level, and $g_{j}(x)$ is the channel gain between the $j$th BS and the user at location $x$. Here, the channel gain reflects only the slow fading including the path loss and the shadowing. For the energy provisioning purpose, the channel gain is measured at a large time scale, and thus fast fading is not considered. The average traffic load density at location $x$ in the $j$th BS is
\begin{equation}
\label{eq:bs_point_load}
\varrho_{j}(x,k)=\frac{\lambda(x,k)\nu(x,k)\eta_{j}(x)}{r_{j}(x)}.
\end{equation}
Here, $\eta_{j}(x)$ is an indicator function. If $\eta_{j}(x)=1$, the user at location $x$ is served by the $j$th BS; otherwise, the user is not served by the BS.
Assuming mobile users are uniformly distributed in the area and denoting $\mathcal{A}$ as the coverage area of all the BSs, the traffic load on the $j$th BS can be expressed as
\begin{equation}
\label{eq:bs_load}
\rho_{j}(k)=\int_{x \in \mathcal{A}}\varrho_{j}(x,k)dx.
\end{equation}
This value of $\rho_{j}(k)$ indicates the fraction of time BS $j$ is busy in the $k$th time slot.

\subsection{Energy Model}
In the network, MBSs are powered by both green energy and on-grid energy. Since we aim to investigate the green energy provision for MBSs, we assume SBSs are powered by on-grid power. The MBS's power consumption consists of two parts: the static power consumption and the dynamic power consumption \cite{Arnold:2010:PSM}. The static power consumption is the power consumption of a MBS without any traffic load. The dynamic power consumption refers to the additional power consumption caused by carrying traffic loads in the MBS, which can be well approximated by a linear function of the traffic load \cite{Arnold:2010:PSM}. Denote $p^{s}_{j}$ as the static power consumption of the $j$th MBS. Then, the $j$th MBS's power consumption in the $k$th time slot can be expressed as
\begin{equation}
\label{eq:bs_power_consumption}
p_{j}(k)=\beta_{j}\rho_{j}(k)+p^{s}_{j}.
\end{equation}
Here, $\beta_{j}$ is a linear coefficient which reflects the relationship between the traffic load and the dynamic power consumption in the $j$th MBS.

Denote $e_{j}(k)$ as the green energy capacity per unit area of solar panel in the $j$th MBS in the $k$th time slot. We define $S_{j}$, $B^{max}_{j}$, and $B^{min}_{j}$ as the solar panel size, the battery capacity, and the minimum permitted battery energy of the $j$th MBS's green power system, respectively. We adopt the linear charge model for the solar power system \cite{Badawy:2010:EPS}. Then, the $j$th MBS's battery energy in the $k$th time slot can be expressed as
\begin{equation}
\label{eq:battery_energy}
b_{j}(k)=\min\{\max\{b_{j}(k-1)+e_{j}(k)S_{j}-p_{j}(k),B^{min}_{j}\},B^{max}_{j}\}.
\end{equation}
$b_{j}(k)$ depends on the battery energy in the $(k-1)$th time slot and the energy generation and consumption in the current time slot. For the consideration of the safety and the battery life, the battery is not allowed to be discharged below $B^{min}_{j}$. In other words, if $b_{j}(k)\leq B^{min}_{j}$, the charge controller disconnects the $j$th MBS from the battery and pulls power from power gird. For simplicity, we assume $B^{min}_{j}=0$ in this paper. The battery also cannot be charged beyond its capacity, $B^{max}_{j}$.

The cost of the green energy system is determined by the solar panel size and the battery capacity. In this paper, we adopt a simple linear model to reflect the cost of the green energy system versus the solar panel size and the battery capacity as follows:
\begin{equation}
\label{eq:energy_cost}
f_{j}(S_{j},B^{max}_{j})=\phi_{s}S_{j}+\phi_{b} B^{max}_{j}.
\end{equation}
Here, $\phi_{s}$ and $\phi_{b}$ indicate the cost per unit area of solar panel and per unit battery capacity, respectively. In addition, equipping a MBS with green energy also incurs installation expenses including the labor costs and the space rental costs which are the expense on leasing the space for installing the green energy system. The per unit energy system installation cost may be different for the MBSs at different locations. In this paper, we assume the locations of MBSs are pre-determined.
We define $w_{j}$ as the cost weight of installing per unit green energy system in the $j$th MBS.
The CAPEX of the $j$th MBS's green energy system is $w_{j}f_{j}(S_{j},B^{max}_{j})$.

\subsection{Problem Formulation}
\label{sec:problem_formulation}
The CAPEX of MBSs' green energy systems depend on the power consumption of the MBSs. In order to minimize the CAPEX, it is desirable to offload as much traffic load from MBSs to SBSs as possible. The aggressive traffic offloading may lead to traffic congestion in SBSs, and thus downgrading the QoS of the network. Therefore, when optimizing the green energy provision, the traffic offloading should be properly considered to ensure the QoS of the network.

We assume that traffic arrival processes at individual users are independent and follow Poisson distributions. Then, the traffic arrival in the $j$th BS, which is the sum of the traffic arrivals toward all users in its coverage area, is also a Poisson process. The required service time for a user at location $x$ in the $j$th BS is
\begin{equation}
\label{eq:required_time}
\theta_{j}= \frac{\nu(x,k)}{r_{j}(x)}.
\end{equation}
Since $\nu(x,k)$ follows a general distribution, the user's required service time is also a general distribution. Hence, a BS's service rate follows a general distribution. Therefore, a BS's downlink transmission process realizes a M/G/1 processor sharing queue, in which multiple users share the BS's downlink radio resource \cite{Kleinrock:1976:QS}.

In mobile networks, various downlink scheduling algorithms have been proposed to enable proper sharing of the limited radio resource in a BS \cite{Capozzi:2013:DPS}. These algorithms are designed to maximize the network capacity, enhance the fairness among users, or provision QoS services. According to the scheduling algorithm, users are assigned different priorities on sharing the downlink radio resource. We assume that during the traffic balancing process, users' data rates do not change. As a result, users in different priority groups perceive different average waiting time. Since traffic arrives at a BS according to Possion arrival statistics, the allowed variation in the average waiting times among different priority groups is constrained by the Conservation Law \cite{Kleinrock:1976:QS}. The integral constraint on the average waiting time in the $j$th BS in the $k$th time slot can be expressed as
\begin{equation}
\label{eq:conservation_law}
\bar{L}_{j}=\frac{\rho_{j}(k)E(\theta_{j}^{2})}{2(1-\rho_{j}(k))}.
\end{equation}
This indicates that given the users' required service time in the $j$th BS, if the scheduling algorithm gives some users higher priority and reduces their average waiting time, it will increase the average waiting time of the other users. Therefore, $\bar{L}_{j}$ generally reflects the $j$th BS's performance in terms of users' average waiting time.
Since $E(\theta_{j}^{2})$ mainly reflects the traffic characteristics, we assume that $E(\theta_{j}^{2})$ is roughly constant during a user association process and define $\vartheta_{j}=\frac{E(\gamma_{j}^{2})}{2}$. Thus, we adopt
\begin{equation}
\label{eq:latency_indicator}
\mu(\rho_{j}(k))=\frac{\vartheta_{j}\rho_{j}(k)}{1-\rho_{j}(k)}
\end{equation}
as a general latency indicator for the $j$th BS. A smaller $\mu(\rho_{j}(k))$ indicates that the $j$th BS introduces less latency to its associated users. For simplicity, we use $\mu_{j}(k)$ to represent $\mu(\rho_{j}(k))$. We utilize $\mu_{j}(k)$ as the QoS indicator for the $j$th BS in the $k$th time slot. To ensure the QoS of the network, $\mu_{j}(k)$ should be less than a threshold, $\zeta$.

Then, the green energy provisioning (GEP) problem can be formulated as
\begin{eqnarray}
\label{eq:obj_energy_provision}
\min_{(S_{j},B_{j}^{max},\forall j\in\mathcal{B}^{m})}&&\sum_{j\in\mathcal{B}^{m}}w_{j}f_{j}(S_{j},B_{j}^{max}) \\
subject\;to:&&\mu_{j}(k)\leq \zeta, \;\forall j\in \mathcal{B}^{m}\cup \mathcal{B}^{s};\nonumber\\
&& b_{j}(k-1)+e_{j}(k)S_{j}\geq \alpha(k)p_{j}(k),\nonumber\\
&&\forall j \in\mathcal{B}^{m}; \nonumber\\
&& 0\leq\rho_{j}(k)\leq 1-\epsilon, \;\forall j\in \mathcal{B}^{m}\cup \mathcal{B}^{s},\nonumber\\
&& \forall k \in\{1, 2, \cdots, N\}.
\end{eqnarray}
There are three constraints for the GEP problem. The first constraint imposes the latency ratios of all BSs to be no larger than $\zeta$. Here, $\zeta$ should be properly selected to ensure the feasibility of the GEP problem.
The second constraint imposes individual MBSs' green power supplies not to be less than its green power demand. Here, green power is defined as power generated from green energy. $0\leq\alpha(k)\leq 1$ is a system parameter that defines the percentage of the power consumption that should be pulled from the MBS's green energy system. $\alpha(k)$ can be selected by the mobile network operators when provisioning the green energy system. A larger $\alpha(k)$ usually results in a higher CAPEX.
The third constraint is to ensure the queuing system to be stable by restricting the traffic load in individual BSs to be less than 1. Here, $\epsilon$ is an arbitrary small positive.

Given the BS deployment and the traffic load statistic, the lower bound of $\zeta$ can be derived in two steps. First, by solving the QoS bound (QB) problem expressed as
\begin{eqnarray}
\label{eq:zeta_boud}
\min_{\rho_{j}(k)} && \max_{j\in\mathcal{B}^{m}\cup\mathcal{B}^{s}} \mu_{j}(k)\\
\label{eq:zeta_constraint}
subject\; to: && 0\leq\rho_{j}(k)\leq 1-\epsilon,
\end{eqnarray}
we can derive the lower bound of $\zeta$ in the $k$th time slot, which is denoted as
\begin{equation}
\label{eq:low_bound}
\mu^{*}(k)=\frac{\vartheta_{j}\rho_{j}^{*}(k)}{1-\rho_{j}^{*}(k)}, \; j=\arg\max_{l\in\mathcal{B}^{m}\cup\mathcal{B}^{s}} \mu_{l}(k).
\end{equation}
Here, $\rho_{j}^{*}(k)$ is the $j$th BS's optimal traffic load derived by solving the QB problem in the $k$th time slot. Then, $\zeta$'s lower bound $\zeta^{*}=\max_{k\in\{1,2,\cdots,N\}}\mu^{*}(k)$. To ensure the GEP problem to be feasible, $\zeta\geq\zeta^{*}$. Similar as the parameter $\alpha(k)$, $\zeta$ is predetermined by the mobile network operator for the green energy provision.

\section{The Green Energy Provisioning Solution}
\label{sec:gep_alg}
Solving the GEP problem is equivalent to determining the optimal solar panel sizes and battery capacities for MBSs. Since the green power systems are provisioned to operate the MBSs during a certain time period (multiple time slots), the solar panel sizes and the battery capacities are determined to satisfy MBSs' green power demands over the time slots. Since battery energy in a time slot depends on that in the previous slots, the optimal solar panel size and battery capacity for a MBS is determined by the MBS's green power demands in multiple time slots. Within a time slot, e.g., the $k$th time slot, the green power demands in a MBS, e.g., the $j$th MBS, depend on its traffic load $\rho_{j}(k)$ and the parameter $\alpha(k)$. Thus, solving the GEP problem involves optimizing their traffic load in multiple time slots. Owing to the complex coupling of network optimization in multiple time slots, it is very challenging to solve the GEP problem.

\subsection{Problem decomposition}
In order to solve the GEP problem, we decompose the GEP problem into two sub-problems: the weighted energy minimization (WEM) problem and the green energy system sizing (GESS) problem. In this way, we decouple the interdependence of network optimization in multiple time slots. The WEM problem optimizes the network's weighted energy cost in individual time slots while the GESS problem optimizes solar panel sizes and battery capacities for individual SBSs according to their energy demands over multiple time slots.

On optimizing the solar panel size, we assume that the $j$th MBS's initial battery energy, $b_{j}(0)$, is zero, and the green energy consumed by the $j$th MBS is all generated from its solar panel. Thus,
\begin{equation}
\label{eq:decomp_sloar_1}
\sum_{k\in\{1,2,\cdots,N\}}e_{j}(k)S_{j}\geq \sum_{k\in\{1,2,\cdots,N\}}\alpha(k)p_{j}(k).
\end{equation}
Considering all MBSs and their weights,
\begin{align}
\label{eq:decomp_sloar_2}
\sum_{k\in\{1,2,\cdots,N\}}&\sum_{j\in\mathcal{B}^{m}}w_{j}e_{j}(k)S_{j}\geq \nonumber\\
&\sum_{k\in\{1,2,\cdots,N\}}\sum_{j\in\mathcal{B}^{m}}w_{j}\alpha(k)p_{j}(k).
\end{align}
E.q. (\ref{eq:decomp_sloar_2}) can be rewritten as
\begin{align}
\label{eq:decomp_sloar_3}
\sum_{j\in\mathcal{B}^{m}}(w_{j}S_{j}&\sum_{k\in\{1,2,\cdots,N\}}e_{j}(k))\geq \nonumber\\
 &\sum_{k\in\{1,2,\cdots,N\}}\sum_{j\in\mathcal{B}^{m}}w_{j}\alpha(k)p_{j}(k).
\end{align}
$e_{j}(k),\;\forall j \in\mathcal{B}^{m},\;\forall k\in\{1,2,\cdots,N\}$ is derived based on the statistical solar power data and is considered as a constant. We assume all MBSs have the similar geolocations. Thus, $\sum_{k\in\{1,2,\cdots,N\}}e_{j}(k)=\sum_{k\in\{1,2,\cdots,N\}}e_{i}(k), \forall i, j \in \mathcal{B}^{m}$. Therefore, minimizing $\sum_{k\in\{1,2,\cdots,N\}}\sum_{j\in\mathcal{B}^{m}}w_{j}\alpha(k)p_{j}(k)$ is necessary to minimize $\sum_{j\in\mathcal{B}^{m}}w_{j}S_{j}$. Since the traffic arrival is a Poisson process, a MBS's traffic load in different time slots are independent. Thus, the MBS's energy consumption in different time slots is independent. Therefore, minimizing $\sum_{k\in\{1,2,\cdots,N\}}\sum_{j\in\mathcal{B}^{m}}w_{j}\alpha(k)p_{j}(k)$ is equivalent to minimizing $\sum_{j\in\mathcal{B}^{m}}w_{j}\alpha(k)p_{j}(k),\;\forall k \in \{1,2,\cdots,N\}$.

Since the MBSs share the similar geolocation, $e_{j}(k)=e_{i}(k),\;\forall i,j\in\mathcal{B},\;\forall k\in\{1,2,\cdots,N\}$. Therefore, the time slots in which solar power is zero are the same for all the MBSs. Denote $\mathcal{K}^{b}$ as the set of these time slots. During these time slots, battery energy is utilized to satisfy the MBSs' demands for green energy. Thus,
\begin{equation}
\label{eq:decom_battery_1}
B^{max}_{j}\geq \sum_{k\in\mathcal{K}^{b}}\alpha(k)p_{j}(k).
\end{equation}
Considering all the MBSs and their cost weights,
\begin{equation}
\label{eq:decom_battery_2}
\sum_{j\in\mathcal{B}^{m}}w_{j}B^{max}_{j}\geq \sum_{k\in\mathcal{K}^{b}}\sum_{j\in\mathcal{B}^{m}}w_{j}\alpha(k)p_{j}(k).
\end{equation}
Since MBSs' energy consumption in different time slots is independent, minimizing $\sum_{j\in\mathcal{B}^{m}}w_{j}\alpha(k)p_{j}(k), \; \forall k\in\mathcal{K}^{b}$ is necessary in order to minimize $\sum_{j\in\mathcal{B}^{m}}w_{j}B^{max}_{j}$.

Based on the above analysis, the WEM problem can be expressed as
\begin{eqnarray}
\label{eq:obj_wem}
\min_{(\rho_{j}(k),\forall j\in\mathcal{B}^{m})}&&\sum_{j\in\mathcal{B}^{m}}w_{j}\alpha(k)p_{j}(k) \\
subject\;to:&&\mu_{j}(k)\leq \zeta, \;\forall j\in \mathcal{B}^{m}\cup \mathcal{B}^{s},\nonumber\\
&& 0\leq\rho_{j}(k)\leq 1-\epsilon, \;\forall j\in \mathcal{B}^{m}\cup \mathcal{B}^{s}.\nonumber\\
\end{eqnarray}

Given the $j$th MBS's energy consumption in all time slots, the GESS problem can be expressed as
\begin{eqnarray}
\label{eq:obj_gess}
\min_{(S_{j},B_{j}^{max})}&&f_{j}(S_{j},B_{j}^{max}) \\
subject\;to:&& b_{j}(k-1)+e_{j}(k)S_{j}\geq \alpha(k)p_{j}(k),\nonumber\\
&& \forall j \in\mathcal{B}^{m}, \forall k \in\{1, 2, \cdots, N\}.
\end{eqnarray}

\subsection{The provision cost aware traffic load balancing}
Since the WEM problem minimizes MBSs' weighted power consumption within a time slot, we use $\mu_{j}$, $\rho_{j}$, and $\alpha$ instead of $\mu_{j}(k)$, $\rho_{j}(k)$, and $\alpha(k)$. Since $\vartheta_{j}$ is a constant within a time slot, we assume $\vartheta_{j}=1$ for presentation simplicity.
For the WEM problem, since $0<\zeta<\infty$, when $\mu_{j}\leq \zeta$, $\rho_{j}\leq \frac{\zeta}{1+\zeta}\leq 1-\epsilon$. Since $\rho_{j}>0$, $\mu_{j}(k)\leq \zeta$ indicates $0\leq\rho_{j}\leq 1-\epsilon$. Therefore, the second inequality constraint of the WEM problem can be eliminated. We then apply Lagrangian dual decomposition to design a provision cost aware traffic load balancing algorithm solving the WEM problem.

Let $\mathcal{B}$=$\mathcal{B}^{m}\cup\mathcal{B}^{s}$ and $w_{j}=0$, $\forall j \in
\mathcal{B}^{s}$.
We introduce a Lagrangian multiplier vector, $\boldsymbol{\upsilon}=(\upsilon_{1},\cdots,\upsilon_{j},\cdots,\upsilon_{|\mathcal{B}|})$. The Lagrangian function of the WEM problem is

\begin{align}
\label{eq:obj_atlo}
L(\boldsymbol{\rho},\boldsymbol{\upsilon})&=\sum_{j\in\mathcal{B}}\alpha w_{j}p_{j} -\upsilon_{j}(\frac{\zeta}{1+\zeta}-\rho_{j})\nonumber\\
&=\sum_{j\in\mathcal{B}}(\alpha w_{j}\beta_{j}+\upsilon_{j})\rho_{j} +\alpha w_{j}\beta_{j}p^{s}_{j}-\upsilon_{j}\frac{\zeta}{1+\zeta}.
\end{align}
Here, $\boldsymbol{\rho}=(\rho_{1},\cdots,\rho_{j},\cdots,\rho_{|\mathcal{B}|})$. Since
\begin{equation}
\label{eq:bs_load_discrete}
\rho_{j}=\sum_{x\in\mathcal{A}}\frac{\lambda_{i}\nu_{i}\eta_{j}(x)}{r_{j}(x)},
\end{equation}
the dual function is given as
\begin{equation}
\label{eq:dual_function}
g(\boldsymbol{\upsilon})= \inf_{\boldsymbol{\eta}}h(\boldsymbol{\upsilon},\boldsymbol{\eta})+\sum_{j\in\mathcal{B}}\alpha w_{j}\beta_{j}p^{s}_{j}-\upsilon_{j}\frac{\zeta}{1+\zeta}.
\end{equation}
where
\begin{equation}
\label{eq:user_side_problem}
h(\boldsymbol{\upsilon},\boldsymbol{\eta})=
\sum_{x\in\mathcal{A}}\sum_{j\in\mathcal{B}}(\alpha w_{j}\beta_{j}+\upsilon_{j})\frac{\lambda_{i}\nu_{i}\eta_{j}(x)}{r_{j}(x)}
\end{equation}
and

\begin{equation}
\boldsymbol{\eta}=\{\eta_{j}(x)| j\in \mathcal{B}, x \in \mathcal{A}\}
\end{equation}
The dual problem is
\begin{eqnarray}
\label{eq:dual_problem}
\max_{\boldsymbol{\upsilon}}&&g(\boldsymbol{\upsilon}) \\
subject\;to:&& \upsilon_{j}\geq 0, \forall j\in\mathcal{B}.
\end{eqnarray}

The provision cost aware traffic load balancing algorithm solves the dual problem and thus addresses the WEM problem. The proposed algorithm includes two parts: the traffic redirect algorithm and the traffic load update algorithm.

The traffic redirect algorithm derives $\eta_{j}(x)$ that minimizes $h(\boldsymbol{\upsilon},\boldsymbol{\eta})$ while the traffic load update algorithm finds the optimal $\boldsymbol{\upsilon}$ that maximizes $g(\boldsymbol{\upsilon})$. The provision cost aware traffic load balancing involves multiple iterations. We denote $\boldsymbol{\upsilon}^{t}=\{\upsilon^{t}_{j}|i\in\mathcal{B}\}$ as the Lagrangian multiplier in the $t$th iteration.
\subsubsection{The traffic redirect algorithm} this algorithm calculates the downlink data rates from all BSs based on the SINR measurements for a user at a location. The traffic to the a user at location $x$ is redirected to the $j^{*}$th BS according to the following traffic redirect rule:
\begin{equation}
\label{eq:user_selection}
j^{*}=\arg\min_{j\in\mathcal{B}}(\alpha w_{j}\beta_{j}+\upsilon^{t}_{j})\frac{\lambda_{i}\nu_{i}\eta_{j}(x)}{r_{j}(x)}.
\end{equation}
\begin{lemma}
\label{thm:user_side_min}
Given $\boldsymbol{\upsilon}^{t}$, the traffic redirect algorithm minimizes $h(\boldsymbol{\upsilon}^{t},\boldsymbol{\eta})$.
\end{lemma}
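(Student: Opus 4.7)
The plan is to exploit the fact that once $\boldsymbol{\upsilon}^{t}$ is fixed, the objective $h(\boldsymbol{\upsilon}^{t},\boldsymbol{\eta})$ is completely separable over locations $x$, because the double sum contains no cross terms that couple different locations, and the per-location inner sum is linear in the indicators $\{\eta_{j}(x)\}_{j\in\mathcal{B}}$. Therefore pointwise minimization at each location $x$ yields a global minimizer over all feasible $\boldsymbol{\eta}$.

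First I would make the association constraint explicit: for every location $x\in\mathcal{A}$, the user is served by exactly one BS, i.e.\ $\eta_{j}(x)\in\{0,1\}$ and $\sum_{j\in\mathcal{B}}\eta_{j}(x)=1$. Then I would rewrite
\begin{equation*}
h(\boldsymbol{\upsilon}^{t},\boldsymbol{\eta})=\sum_{x\in\mathcal{A}} H_{x}(\boldsymbol{\eta}),\qquad H_{x}(\boldsymbol{\eta})=\sum_{j\in\mathcal{B}}c_{j}(x)\,\eta_{j}(x),
\end{equation*}
where $c_{j}(x):=(\alpha w_{j}\beta_{j}+\upsilon_{j}^{t})\,\lambda_{i}\nu_{i}/r_{j}(x)$ is a constant determined by $\boldsymbol{\upsilon}^{t}$, the physical parameters, and the measured SINR. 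Because the feasibility constraint on $\eta_{\cdot}(x)$ involves only $x$, minimizing $h$ is equivalent to minimizing each $H_{x}$ independently.

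Next I would argue the standard fact that a linear function over the probability simplex (restricted to its vertices) attains its minimum at a vertex corresponding to the smallest coefficient: $\min H_{x}=\min_{j\in\mathcal{B}}c_{j}(x)$, achieved by setting $\eta_{j^{*}}(x)=1$ and $\eta_{j}(x)=0$ for $j\neq j^{*}$, where $j^{*}=\arg\min_{j}c_{j}(x)$. Observing that this tie-breaking rule coincides exactly with the traffic redirect rule in \eqref{eq:user_selection}, I would conclude that the aggregate $\boldsymbol{\eta}$ produced by the traffic redirect algorithm achieves $\sum_{x}\min_{j}c_{j}(x)$, a lower bound on $h(\boldsymbol{\upsilon}^{t},\boldsymbol{\eta})$ over all feasible $\boldsymbol{\eta}$, hence is a minimizer.

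There is no substantive obstacle; the only subtlety worth flagging explicitly is the uniqueness of the association (that each user is carried by a single BS), which is what collapses the per-location minimization onto a single vertex. If ties occur among the $c_{j}(x)$, any tie-breaking rule yields an equally optimal value of $h$, so the lemma still holds verbatim.
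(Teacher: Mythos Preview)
Your proposal is correct and follows essentially the same approach as the paper: both arguments use the single-association constraint to separate $h(\boldsymbol{\upsilon}^{t},\boldsymbol{\eta})$ over locations and then observe that the per-location minimum of $\sum_{j}c_{j}(x)\eta_{j}(x)$ is attained by the $j^{*}$ with smallest coefficient, which is exactly the traffic redirect rule. Your write-up is slightly more explicit about the separability and the simplex/vertex interpretation, but the underlying idea is identical to the paper's proof.
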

\begin{proof}
\label{prf:user_side_min}
Since a user can only associate with one BS, if $\eta_{j^{*}}(x)=1$, $\forall j\in\mathcal{B}$ and $j\neq j^{*}$, $\eta_{j}(x)=0$. Denote $\boldsymbol{\eta^{*}}$ as the traffic redirection derived by traffic redirect algorithm. Assume $\boldsymbol{\eta}$ is an arbitrary traffic redirection that $\boldsymbol{\eta}\neq\boldsymbol{\eta^{*}}$.
\begin{align}
\label{eq:user_side_min}
&h(\boldsymbol{\upsilon}^{t},\boldsymbol{\eta^{*}})-h(\boldsymbol{\upsilon}^{t},\boldsymbol{\eta})\\
&=\sum_{x\in\mathcal{A}}[(\alpha w_{j^{*}}\beta_{j^{*}}+\upsilon^{t}_{j^{*}})\frac{\lambda_{i}\nu_{i}\eta_{j^{*}}(x)}{r_{j^{*}}(x)}-(\alpha w_{j}\beta_{j}+\upsilon^{t}_{j})\frac{\lambda_{i}\nu_{i}\eta_{j}(x)}{r_{j}(x)}]
\end{align}
because
\begin{equation}
\label{eq:user_selection}
j^{*}=\arg\min_{j\in\mathcal{B}}(\alpha w_{j}\beta_{j}+\upsilon^{t}_{j})\frac{\lambda_{i}\nu_{i}\eta_{j}(x)}{r_{j}(x)}.
\end{equation}
$h(\boldsymbol{\upsilon},\boldsymbol{\eta^{*}})-h(\boldsymbol{\upsilon}^{t},\boldsymbol{\eta})\leq 0$. Thus, the lemma is proved.
\end{proof}

\subsubsection{traffic load update algorithm} given the traffic redirection, $\boldsymbol{\eta^{*}}$, the traffic load update algorithm measures BSs' traffic load and update the Lagrangian multiplier to maximize $g(\boldsymbol{\upsilon})$. Denote $\rho^{t}_{j}$ as the $j$th BS's traffic load after the $t$th iteration. The multiplier in the $j$th BS in the $(t+1)$th iteration is updated as
\begin{equation}
\label{eq:lag_multiplier}
\upsilon^{t+1}=\upsilon^{t}+\delta^{t}(\rho^{t}_{j}-\frac{\zeta}{1+\zeta}).
\end{equation}
Here, $\delta^{t}>0$ is a dynamically selected step size that ensures the convergence of the iterations between users and BSs. $\delta^{t}$ is chosen base on
\begin{equation}
\label{eq:stepsize_rule}
\delta^{t}=\gamma^{k}\frac{g(\boldsymbol{\upsilon}^{t})-g(\hat{\boldsymbol{\upsilon}})+\varepsilon^{t}}{\|\rho_{j}^{t}-\frac{\zeta}{1+\zeta}\|^{2}}
\end{equation}
where $0<\underline{\gamma}\leq\gamma^{k}\leq\overline{\gamma}<2$, $\underline{\gamma}$ and $\overline{\gamma}$ are some scalar \cite{Bertsekas:20049:CVX} and $\varepsilon^{t}$ is updates according to
\begin{equation}
\label{eq:update_approximate}
\varepsilon^{t}=\left\{
\begin{aligned}
&a\varepsilon^{t},&g(\boldsymbol{\upsilon}^{t+1})\leq g(\boldsymbol{\upsilon}^{t})\\
&\max(b\varepsilon^{t},\varepsilon), &g(\boldsymbol{\upsilon}^{t+1})> g(\boldsymbol{\upsilon}^{t}),
\end{aligned}
\right.
\end{equation}
where $a$, $b$ and $\varepsilon$ are fixed positive constants with $a\geq1$ and $b<1$.
In Eq. (\ref{eq:stepsize_rule}),
$\boldsymbol{\hat{\upsilon}}=\{\hat{\upsilon}_{j}|j\in\mathcal{B}\}$ is an estimation of the optimal Lagrangian multiplier as
\begin{equation}
\label{eq:obj_estimation}
\hat{\boldsymbol{\upsilon}}= \arg\min_{(\boldsymbol{\upsilon}^{m}, 0\leq m\leq t)} g(\boldsymbol{\upsilon}^{m}).
\end{equation}
\begin{proposition}
\label{thm:bound_proposition}
There exists some scalar $c$ such that
\begin{equation}
\label{eq:propostion}
\sup\{\|q(\boldsymbol{\upsilon})\|\;|\;q(\boldsymbol{\upsilon})\in \partial{g(\boldsymbol{\upsilon}^{t})}, \forall t\geq 0\}\leq c.
\end{equation}
\end{proposition}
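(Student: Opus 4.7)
The plan is to show that every supergradient of the concave dual function $g$ at any iterate $\boldsymbol{\upsilon}^{t}$ has components that are bounded by network-level constants independent of $t$. First I would characterize $\partial g$. The dual $g(\boldsymbol{\upsilon})$ is the infimum over $\boldsymbol{\eta}$ of a function affine in $\boldsymbol{\upsilon}$, together with the linear term $-\sum_{j}\upsilon_{j}\zeta/(1+\zeta)$, so it is concave and polyhedral in $\boldsymbol{\upsilon}$. By Danskin's theorem, any $q(\boldsymbol{\upsilon})\in\partial g(\boldsymbol{\upsilon})$ has components of the form
\[
q_{j}(\boldsymbol{\upsilon}) \;=\; \rho_{j}^{\ast}(\boldsymbol{\upsilon}) \;-\; \frac{\zeta}{1+\zeta},
\]
where $\rho_{j}^{\ast}(\boldsymbol{\upsilon})$ is the induced load on BS $j$ under some minimizer $\boldsymbol{\eta}^{\ast}(\boldsymbol{\upsilon})$ of the inner problem, i.e., an assignment produced by the traffic redirect rule established in Lemma \ref{thm:user_side_min}.

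Next I would uniformly bound $|q_{j}(\boldsymbol{\upsilon})|$. By the traffic redirect rule, $\eta_{j}(x)\in\{0,1\}$ with every location served by at most one BS, so combining with (\ref{eq:bs_load_discrete}) gives
\[
0 \;\leq\; \rho_{j}^{\ast}(\boldsymbol{\upsilon}) \;\leq\; \sum_{x\in\mathcal{A}}\frac{\lambda(x,k)\nu(x,k)}{r_{j}(x)} \;=:\; \rho_{j}^{\max}.
\]
Because $\mathcal{A}$ is bounded, the arrival rate $\lambda$ and average traffic load $\nu$ are finite physical quantities, and from (\ref{eq:user_rate})--(\ref{eq:user_SINR}) the rate $r_{j}(x)$ is strictly positive on the coverage region, $\rho_{j}^{\max}$ is a finite constant that depends only on network parameters, not on $\boldsymbol{\upsilon}$ or $t$. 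Hence $|q_{j}(\boldsymbol{\upsilon})|\leq \max\{\rho_{j}^{\max},\,\zeta/(1+\zeta)\}$, and choosing
\[
c \;:=\; \sqrt{|\mathcal{B}|}\,\max_{j\in\mathcal{B}}\max\{\rho_{j}^{\max},\,\zeta/(1+\zeta)\}
\]
yields $\|q(\boldsymbol{\upsilon}^{t})\|\leq c$ for every $q(\boldsymbol{\upsilon}^{t})\in\partial g(\boldsymbol{\upsilon}^{t})$ and every $t\geq 0$.

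The main obstacle, I anticipate, will be handling the points where $g$ is nondifferentiable, i.e., where several assignments $\boldsymbol{\eta}^{\ast}$ tie for the inner minimum and $\partial g(\boldsymbol{\upsilon})$ is a full convex polytope rather than a singleton. To deal with this I would invoke the general form of Danskin's theorem: every element of $\partial g(\boldsymbol{\upsilon})$ is a convex combination of residual vectors of the form $\rho_{j}^{\ast}(\boldsymbol{\upsilon})-\zeta/(1+\zeta)$, one for each active minimizer $\boldsymbol{\eta}^{\ast}$. Since each such vector already satisfies the component-wise bound derived above, any convex combination of them does too, so the uniform bound $c$ still applies and the proposition follows.
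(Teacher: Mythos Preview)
Your proof is correct and follows essentially the same route as the paper: identify that any (super)gradient of $g$ has $j$th component equal to the induced load $\rho_{j}^{\ast}-\zeta/(1+\zeta)$, then use $\eta_{j}(x)\in\{0,1\}$ together with finiteness of the traffic and rate quantities to bound that load uniformly in $\boldsymbol{\upsilon}$. The paper's argument is a terse two-line sketch of this same idea, whereas you additionally invoke Danskin's theorem explicitly, give a concrete value for $c$, and handle the nondifferentiable case via convex combinations---all of which makes your version more rigorous but not different in substance.
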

\begin{proof}
\begin{equation}
\partial{g(\boldsymbol{\upsilon}^{t})}=\inf_{\mathcal{\eta}}\sum_{x\in\mathcal{A}}\frac{\lambda_{i}\nu_{i}\eta_{j}(x)}{r_{j}(x)} - \frac{\zeta}{1+\zeta}.
\end{equation}
Because $\eta_{j}(x)=\{0,1\}$, $\sum_{x\in\mathcal{A}}\frac{\lambda_{i}\nu_{i}\eta_{j}(x)}{r_{j}(x)}$ is bounded. Thus, the subgradient of the dual problem is bounded:
\begin{equation}
\label{eq:propostion}
\sup\{\|q(\boldsymbol{\upsilon})\|\;|\;q(\boldsymbol{\upsilon})\in \partial{g(\boldsymbol{\upsilon}^{t})}, \forall t\geq 0\}\leq c.
\end{equation}
\end{proof}
\begin{theorem}
\label{thm:ua_converge}
Assume that $\delta^{t}$ is determined by the dynamic step size rule in Eq. (\ref{eq:stepsize_rule}) with the adjustment procedures in Eqs. (\ref{eq:update_approximate}) and (\ref{eq:obj_estimation}). If $g(\boldsymbol{\upsilon}^{*})<\infty$,
\begin{equation}
\label{eq:thm_converge}
\sup_{t\geq 0} g(\boldsymbol{\upsilon}^{t})\geq g(\boldsymbol{\upsilon}^{*}) - \varepsilon,
\end{equation}
where $\boldsymbol{\upsilon}^{*}$ denotes the optimal Lagrangian multiplier.
\end{theorem}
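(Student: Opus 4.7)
The plan is to argue by contradiction in the style of Bertsekas's analysis of dynamic (Polyak-type) step sizes for subgradient methods. Assume toward a contradiction that $\sup_{t\geq 0} g(\boldsymbol{\upsilon}^t) < g(\boldsymbol{\upsilon}^*) - \varepsilon$. Then for every $t$ we have a strict gap $g(\boldsymbol{\upsilon}^*) - g(\boldsymbol{\upsilon}^t) > \varepsilon$, and I intend to show that this forces the distance $\|\boldsymbol{\upsilon}^t-\boldsymbol{\upsilon}^*\|^2$ to decrease by a bounded-below positive amount at every iteration, which is absurd.

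The first step is the standard Fej\'er-type recursion. Using the update $\boldsymbol{\upsilon}^{t+1}=\boldsymbol{\upsilon}^t+\delta^t q(\boldsymbol{\upsilon}^t)$ together with the supergradient inequality for the concave dual $g$, I obtain
\begin{equation}
\|\boldsymbol{\upsilon}^{t+1}-\boldsymbol{\upsilon}^*\|^2 \leq \|\boldsymbol{\upsilon}^t-\boldsymbol{\upsilon}^*\|^2 - 2\delta^t\bigl(g(\boldsymbol{\upsilon}^*)-g(\boldsymbol{\upsilon}^t)\bigr) + (\delta^t)^2\|q(\boldsymbol{\upsilon}^t)\|^2.
\end{equation}
Substituting the dynamic step size from Eq.~(\ref{eq:stepsize_rule}) cancels the $\|q(\boldsymbol{\upsilon}^t)\|^2$ in the last term and yields a recursion controlled by $\gamma^k$, the current gap $g(\boldsymbol{\upsilon}^t)-g(\hat{\boldsymbol{\upsilon}})+\varepsilon^t$, and the true suboptimality $g(\boldsymbol{\upsilon}^*)-g(\boldsymbol{\upsilon}^t)$. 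Because $\underline{\gamma}\leq\gamma^k\leq\overline{\gamma}<2$, the resulting coefficient of $\bigl(g(\boldsymbol{\upsilon}^t)-g(\hat{\boldsymbol{\upsilon}})+\varepsilon^t\bigr)$ can be shown to be bounded below by a positive constant.

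Next I would analyze the adjustment procedure for $\varepsilon^t$ from Eq.~(\ref{eq:update_approximate}) together with the running estimate $\hat{\boldsymbol{\upsilon}}$ from Eq.~(\ref{eq:obj_estimation}). Two regimes must be handled: when the dual value fails to improve, $\varepsilon^t$ contracts by factor $a$, whereas after an improvement it is reset to at least $\varepsilon$. Under the standing contradiction hypothesis, the ``no improvement'' regime eventually dominates, and $\varepsilon^t$ is squeezed below $\varepsilon$, which combined with the gap bound $g(\boldsymbol{\upsilon}^*)-g(\boldsymbol{\upsilon}^t)>\varepsilon$ guarantees that the right-hand side of the Fej\'er recursion strictly decreases the distance to $\boldsymbol{\upsilon}^*$ by at least some fixed $\kappa>0$ for all sufficiently large $t$. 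Here I will use Proposition~\ref{thm:bound_proposition} to uniformly bound $\|q(\boldsymbol{\upsilon}^t)\|\leq c$, which keeps $\delta^t$ bounded away from zero whenever the gap is bounded away from zero, and therefore keeps the decrease $\kappa$ bounded away from zero.

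Telescoping the recursion over $t$ then gives $\|\boldsymbol{\upsilon}^t-\boldsymbol{\upsilon}^*\|^2 \leq \|\boldsymbol{\upsilon}^0-\boldsymbol{\upsilon}^*\|^2 - t\kappa$ for $t$ large, which is impossible since the left-hand side is nonnegative. This contradiction establishes that $\sup_{t\geq 0} g(\boldsymbol{\upsilon}^t)\geq g(\boldsymbol{\upsilon}^*)-\varepsilon$, as required. The main obstacle is the careful bookkeeping of the two-branch update of $\varepsilon^t$ coupled with the dynamic estimate $\hat{\boldsymbol{\upsilon}}$: one must argue that even the ``bad'' branch ($\varepsilon^t\mapsto a\varepsilon^t$ with $a\geq 1$) cannot persistently inflate $\varepsilon^t$ without producing dual improvements that would themselves contradict the hypothesis, so that $\varepsilon^t$ is eventually pinned near $\varepsilon$ and the contraction argument closes.
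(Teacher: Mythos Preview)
Your approach is the same as the paper's in spirit: the paper's entire proof is a two-line appeal to Proposition~6.3.6 of Bertsekas's convex optimization text, after invoking Proposition~\ref{thm:bound_proposition} to verify the bounded-subgradient hypothesis that proposition requires. What you have written is essentially a sketch of the proof of that very proposition, so there is no genuine difference in route; you are simply unpacking what the paper cites as a black box.

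That said, your bookkeeping of the $\varepsilon^t$ adjustment is internally inconsistent and would not close as written. You first say that in the ``no improvement'' branch $\varepsilon^t$ \emph{contracts} by factor $a$ and is ``squeezed below $\varepsilon$,'' but since $a\geq 1$ this branch \emph{inflates} $\varepsilon^t$, as you yourself note two sentences later. The correct mechanism (in Bertsekas's argument) is the reverse of what you describe: under the contradiction hypothesis the \emph{improvement} branch $g(\boldsymbol{\upsilon}^{t+1})>g(\boldsymbol{\upsilon}^t)$ cannot occur infinitely often (otherwise the running best $g(\hat{\boldsymbol{\upsilon}})$ would climb past $g(\boldsymbol{\upsilon}^*)-\varepsilon$), so eventually only the $a\varepsilon^t$ branch fires and $\varepsilon^t$ \emph{grows} without bound. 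This growing target level, combined with the fixed suboptimality gap and the step-size formula, is what drives the per-step decrease in $\|\boldsymbol{\upsilon}^t-\boldsymbol{\upsilon}^*\|^2$ to be bounded below, yielding the telescoping contradiction. Once you reverse the direction of the $\varepsilon^t$ dynamics in your analysis, the rest of your Fej\'er-monotonicity outline is correct and matches the cited result.
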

\begin{proof}
\label{prf:ua_converge}
Based on Proposition \ref{thm:bound_proposition}, the dual problem satisfies the necessary condition of Proposition 6.3.6 in \cite{Bertsekas:20049:CVX}. The theorem is proved by applying this proposition.
\end{proof}

After $\boldsymbol{\upsilon}$ converges, the optimal traffic load balancing is derived according to the traffic redirect algorithm, based on which we obtain the optimal traffic load $\boldsymbol{\rho}^{*}$ and thus calculate the BSs' energy consumption in the time slot.
\subsection{The green energy system sizing}
After solving the WEM problem for all the time slots, we obtain individual MBSs' energy consumption in each time slot. Based on the energy consumption, we solve the GESS problem to derive the optimal solar panel size and battery capacity for MBSs.
\begin{figure}
\centering
\includegraphics[scale=0.8]{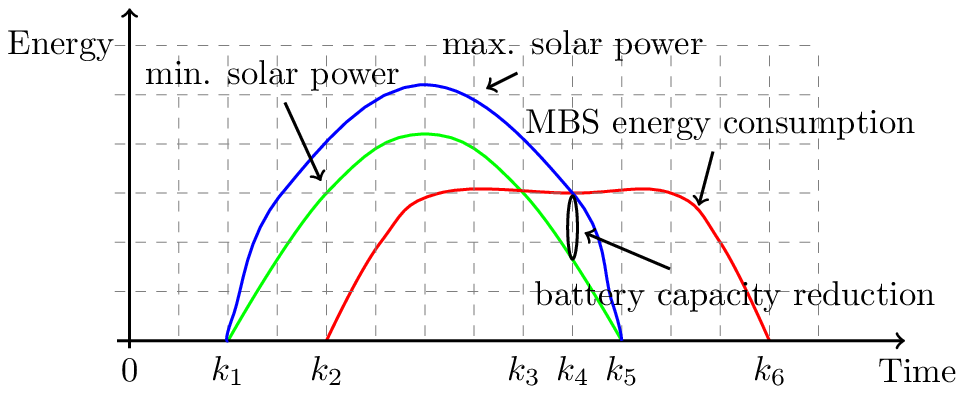}
\caption{An illustration of the solar panel size and the battery capacity.}
\label{fig:gess_eg}
\end{figure}
Fig. \ref{fig:gess_eg} shows an example of the solar power generation and the green power demand in a MBS. The solar power generation starts in the $k_{1}$th time slot and ends in the $k_{5}$th time slot. The MBS is activated in the $k_{2}$th time slot and turned off in the $k_{6}$th time slot. In order to power the MBS, the solar power generation should at least equal to the MBS's green power consumption. A MBS's green power consumption equals to the MBS's total power consumption multiplied by the percentage of power pulled from green energy.
We define the minimum solar panel size as the solar panel size with which the solar power generation equals to the MBS's green power consumption.
The green plot in the figure indicates the solar power generation with the minimum panel size. In this case, the MBS's green power comes from both the solar panel and the battery in the $k_{4}$th time slot while the battery is responsible for the power supplies in the $k_{5}$th and the $k_{6}$th time slot.
The battery capacity should at least equal to the MBS's green power consumption from the $k_{4}$th to the $k_{6}$th time slot minus the solar power generation in the $k_{4}$th time slot.
\begin{lemma}
\label{thm:panel_battery}
On powering the SBS, increasing the solar panel size does not increase the required battery capacity.
\end{lemma}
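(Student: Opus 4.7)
The plan is to formulate the required battery capacity as a function $B^{\star}(S)$ of the solar panel size $S$ and then show directly, by a coupling argument between two runs of the battery dynamics, that $B^{\star}$ is non-increasing in $S$. Concretely, define $B^{\star}(S)$ to be the smallest value of $B^{max}_j$ such that the trajectory generated by~(\ref{eq:battery_energy}) with $b_j(0)=0$ satisfies the green-power-supply constraint $b_j(k-1)+e_j(k)S \geq \alpha(k)p_j(k)$ for every $k\in\{1,\dots,N\}$. In this notation the lemma becomes the statement that $S_2 \geq S_1$ implies $B^{\star}(S_2) \leq B^{\star}(S_1)$.

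Fix $S_2 \geq S_1$, set $B := B^{\star}(S_1)$, and consider two battery trajectories $b^{(1)}(k)$ and $b^{(2)}(k)$ generated by the dynamics in~(\ref{eq:battery_energy}) with the common cap $B$ and the common initial condition $0$, but driven by panel sizes $S_1$ and $S_2$ respectively. I would prove by induction on $k$ that $b^{(2)}(k) \geq b^{(1)}(k)$. The base case $k=0$ is immediate. For the inductive step, assume $b^{(2)}(k-1) \geq b^{(1)}(k-1)$. Because $B$ is admissible for $S_1$, the supply constraint applied to the first run gives $b^{(1)}(k-1)+e_j(k)S_1 - \alpha(k)p_j(k) \geq 0$, so combining with $e_j(k)\geq 0$, $S_2\geq S_1$, and the inductive hypothesis yields
\[
b^{(2)}(k-1)+e_j(k)S_2 - \alpha(k)p_j(k) \;\geq\; b^{(1)}(k-1)+e_j(k)S_1 - \alpha(k)p_j(k) \;\geq\; 0.
\]
Applying $\min(\,\cdot\,,B)$ to both ends and using monotonicity of $\min$ preserves the inequality, giving $b^{(2)}(k) \geq b^{(1)}(k)$; as a by-product, the chain above shows that the supply constraint also holds for the trajectory driven by $S_2$ with the same cap $B$.

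Consequently, the capacity $B = B^{\star}(S_1)$ is itself admissible for panel size $S_2$, and by minimality of $B^{\star}(S_2)$ one obtains $B^{\star}(S_2) \leq B^{\star}(S_1)$, which is the conclusion of the lemma. There is no real obstacle here: the only subtlety is correctly handling the $\min$ with the cap, which is resolved by observing that $\min(\,\cdot\,,B)$ is monotone, so the argument is essentially a one-line coupling dressed in the notation of the paper.
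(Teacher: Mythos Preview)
Your argument is correct in substance and considerably more careful than the paper's own proof, which dispatches the lemma in two sentences of pure intuition: the battery only has to cover the time slots in which solar output falls short of the BS's consumption; since enlarging the panel does not change $p_j(k)$, the shortfall---and hence the required battery capacity---cannot grow. No trajectory is tracked, no induction is carried out, and the saturated update rule~(\ref{eq:battery_energy}) is never invoked. What you gain by formalizing the claim as a coupling of two battery trajectories under a common cap is an argument that actually respects the clipping in~(\ref{eq:battery_energy}) and makes precise what ``required capacity'' means via the supply constraint; the paper's heuristic glosses over both.

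One small slip to clean up: in your displayed chain you subtract $\alpha(k)p_j(k)$ and then assert that applying $\min(\cdot,B)$ to both ends yields $b^{(2)}(k)\geq b^{(1)}(k)$. But the recursion~(\ref{eq:battery_energy}) subtracts the full $p_j(k)$, not $\alpha(k)p_j(k)$, and it also passes through $\max(\cdot,0)$ before the cap. The fix is immediate: the inductive hypothesis together with $e_j(k)\geq 0$ and $S_2\geq S_1$ already gives
\[
b^{(2)}(k-1)+e_j(k)S_2-p_j(k)\;\geq\; b^{(1)}(k-1)+e_j(k)S_1-p_j(k),
\]
and since $x\mapsto\min(\max(x,0),B)$ is monotone, $b^{(2)}(k)\geq b^{(1)}(k)$ follows directly. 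Keep the $\alpha(k)p_j(k)$ chain as a separate line, used only to verify the supply constraint at step $k$; the two inequalities serve different purposes and should not be merged.
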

\begin{proof}
\label{prf:panel_battery}
The battery is responsible for the power supplies during the time slots in which solar power is less than the MBS's power consumption. Given the solar energy generation rate, increasing the solar panel size does not increase the MBS's energy consumption, and thus does not increase the required battery capacity.
\end{proof}

In some cases, increasing the solar panel size enables the reduction of the required battery capacity.
For example, as shown in the blue plot in Fig. \ref{fig:gess_eg}, the increase of the solar panel size increases the solar power generation. As a result, the MBS's energy consumption in the $k_{4}$th time slot is fully covered by solar power. The battery is only responsible for the power supplies from the $k_{5}$th to $k_{6}$th time slot. Thus, the battery capacity can be reduced.
However, when the solar panel size is large enough, a further increase of the solar panel size does not decrease the required battery capacity. For example, assume the MBS's solar power generation is shown in the blue plot in Fig. \ref{fig:gess_eg}. Further increase of the solar panel size does not reduce the required battery capacity because the solar power generation rate is zero in the $k_{5}$th time slot. We define the maximum solar panel size as the solar panel size with which a further increase of the panel size does not decrease the required battery capacity. Denote $S^{max}_{j}$ as the $j$th MBS's maximum solar panel size.
\begin{equation}
\label{eq:max_size}
S^{max}_{j}=\left\lceil\max_{k \in \{l|e_{j}(l)>0,\;l\in\{1,2,\cdots,N\}\}}\frac{\alpha(k)p_{j}(k)}{e_{j}(k)}\right \rceil.
\end{equation}
Here,  $\lceil x \rceil$ denotes the smallest integer that is greater than or equal to $x$. Denote $S^{min}_{j}$ as the $j$th SBS's minimum solar panel size. Let $m_{j}=\arg\max_{k \in \{l|p_{j}(l)>0,\;l\in\{1,2,\cdots,N\}\}} k$
\begin{equation}
\label{eq:min_size}
S^{min}_{j}=\left\lceil\frac{\sum_{k\in\{1,2,\cdots,m_{j}\}}p_{j}(k)}{\sum_{k\in\{1,2,\cdots,m_{j}\}}e_{j}(k)}\right\rceil.
\end{equation}
Solving the GESS problem involves the trade off between the solar panel size and the battery capacity. We apply the binary search method to find the optimal solar panel size, and then derive the corresponding battery capacity.
Given the $j$th MBS's solar panel size and the solar energy generation rates, the $j$th MBS's solar power generation in an individual time slot is calculated. Given the MBS's green power consumption, the battery capacity is derived to guarantee that sufficient energy is stored to satisfy the MBS's energy demand in each time slot. If the current solar panel size cannot sufficiently charge the battery to power the MBSs during the time slots in which the solar power generation is less than the power consumption, we set the battery capacity to be infinity. As a result, the cost of the green energy system will be infinity. Thus, the binary energy system sizing (BESS) algorithm increases the solar panel size. Denote $S^{tmp}_{j}$ and $B^{tmp}_{j}$ as the $j$th MBS's intermediate solar panel size and battery capacity, respectively. The pseudo code of the (BESS) algorithm is shown in Algorithm \ref{alg:gess_alg}.
\begin{algorithm}
\SetKwData{Left}{left}\SetKwData{This}{this}\SetKwData{Up}{up}
\SetKwFunction{Union}{Union}\SetKwFunction{FindCompress}{FindCompress}
\SetKwInOut{Input}{Input}\SetKwInOut{Output}{Output}
\Input{$e_{j}(k),p_{j}(k),\; \forall k \in \{1,2,\cdots,N\};$}
\Output{$S_{j},\; B^{max}_{j};$}
\nl Calculate $S^{min}_{j}$ and $S^{max}_{j}$\;
\nl Assign $S_{j}=S^{min}_{j}$, derive $B^{max}_{j}$ and $f_{j}(S_{j},B^{max}_{j})$\;
\nl \While{$S^{max}_{j}\neq S^{min}_{j}$}{
\nl Assign $S^{tmp}_{j}=\lceil1/2(S^{max}_{j}+S^{min}_{j})\rceil$\;
\nl Calculate $B^{tmp}_{j}$ and $f_{j}(S^{tmp}_{j},B^{tmp}_{j})$ \;
\nl \If{$f_{j}(S^{tmp}_{j},B^{tmp}_{j})\leq f_{j}(S_{j},B^{max}_{j})$ or $f_{j}(S^{tmp}_{j},B^{tmp}_{j})=\inf$}{
\nl Assign $S_{j}=S^{tmp}_{j}$, $B^{max}_{j}=B^{tmp}_{j}$\;
\nl Assign $S^{min}_{j}=S^{tmp}_{j}$\;
}
\nl \Else{
\nl Assign $S^{max}_{j}=S^{tmp}_{j}$\;
}
}
\nl Assign $S_{j}=S^{min}_{j}$, derive $B^{max}_{j}$ and $f_{j}(S_{j},B^{max}_{j})$\;
\caption{The BESS algorithm \label{alg:gess_alg}}
\end{algorithm}
The binary search based algorithm requires at most $log_{2}(S^{max}_{j}-S^{min}_{j})$ iterations to find the optimal solar panel size. Within each iteration, calculating the battery capacity with a given solar panel size requires $N$ iterations. Thus, the complexity of the BESS algorithm is $O(N\log_{2}(S^{max}_{j}-S^{min}_{j}))$.

\section{Simulation Results}
\label{sec:simulation}

\begin{figure*}[!ht]
\centering
\hspace*{\fill}
  \begin{subfigure}[b]{0.5\textwidth}
   	\includegraphics[scale=0.3]{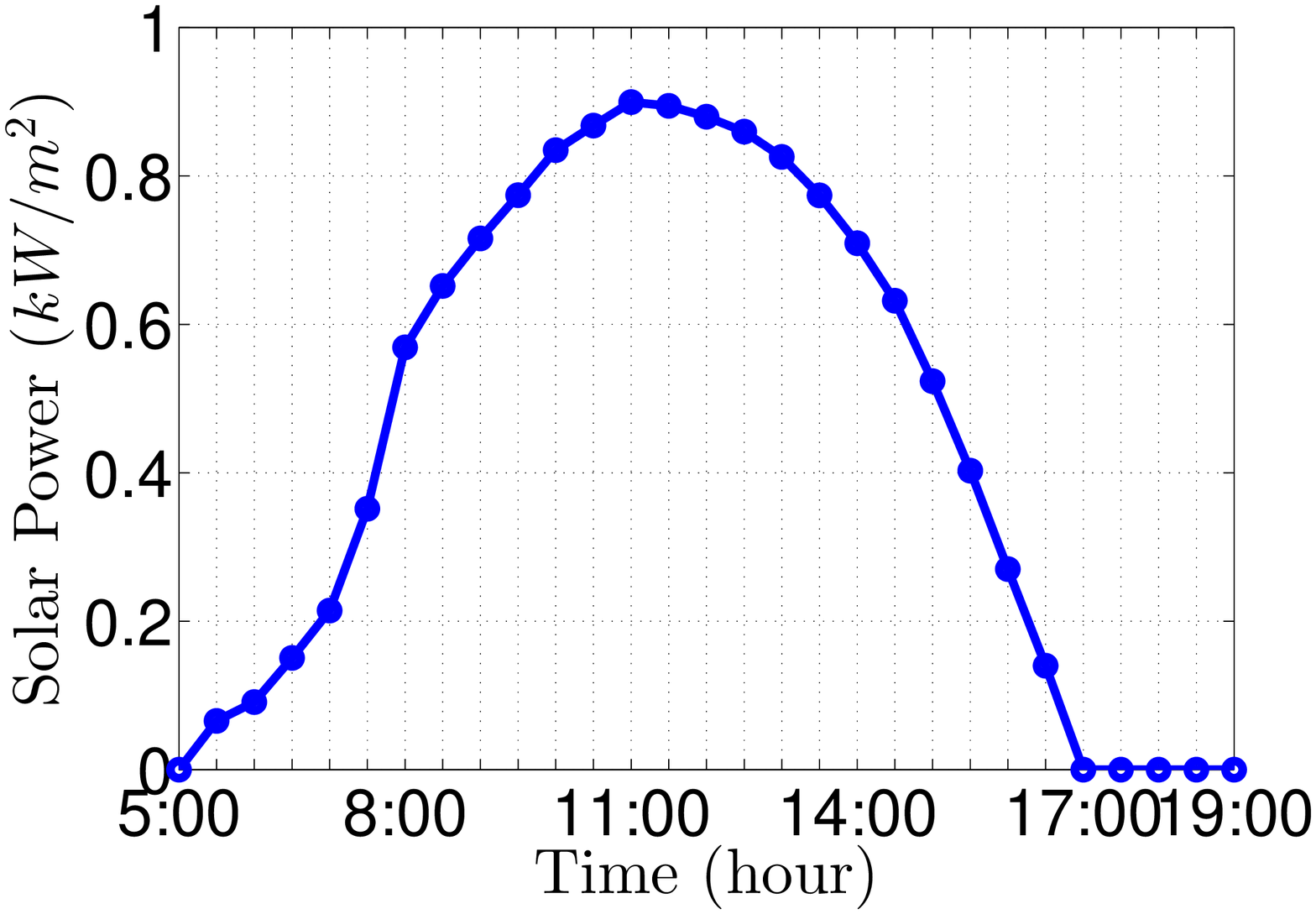}
	\caption{Solar power rate.}
	\label{fig:solar_power}
  \end{subfigure}%
  \begin{subfigure}[b]{0.5\textwidth}
   	\includegraphics[scale=0.3]{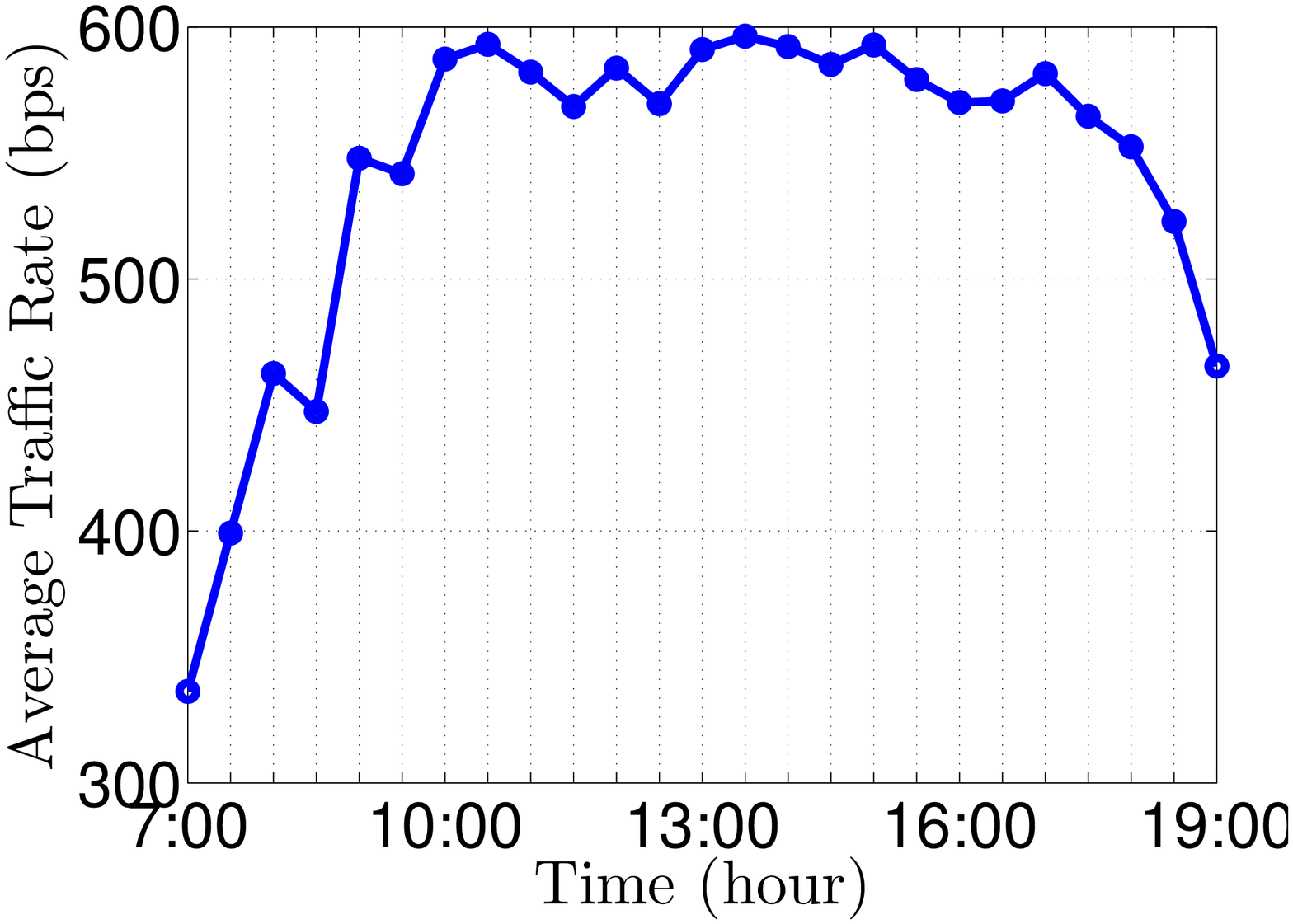}
	\caption{Average traffic rate.}
	\label{fig:average_traffic}
  \end{subfigure}
\end{figure*}

\begin{table}[ht]
\caption{Channel Model and Parameters}
\centering
\begin{tabular}{l||l}
\hline
Parameters & Value\\
\hline
$PL_{MBS}$ (dB) & $PL_{MBS}=128.1+37.6\log_{10}(d)$\\
$PL_{SCBS}$ (dB) & $PL_{SCBS}=38+10\log_{10}(d)$ \\
Rayleigh fading & 9 $dB$\\
Shadowing fading & 5 $dB$ \\
Antenna gain & 15 $dB$\\
Noise power level & -174 $dBm$ \\
Receiver sensitivity & -123 $dBm$ \\
\hline
\end{tabular}
\label{table:sim_parameters}
\end{table}

Simulations are set up to evaluate the performance of the proposed heuristic green energy provisioning solution in HetNets. In the simulation, we consider a HetNet with five MBSs and fifteen SBSs deployed in a $2km\times2km$ area. The MBS's transmission power is 43 $dBm$ while the SBS's transmission power is 33 $dBm$. The channel propagation model is based on COST 231 Walfisch-Ikegami \cite{COST231}. The model and parameters are summarized in Table \ref{table:sim_parameters}. Here, $PL_{MBS}$ and $PL_{SCBS}$ are the path loss between the users and MBSs and SCBSs, respectively. $d$ is the distance between users and BSs. The total bandwidth is 10~$MHz$ and the frequency reuse factor is one.

The static power consumption and the load-power coefficient of the MBS are 750~$W$ and 500, respectively \cite{Auer:2011:HME}. Here, we assume all MBSs have the same static power consumption and the same linear coefficient, $\beta_{j}=\beta, \forall j \in \mathcal{B}^{m}$. The duration of a time slot for the energy provisioning is 30 minutes. Solar power is utilized from 7:00 AM to 7:00 PM.
The solar power generation rate, which is shown in Fig.\ref{fig:solar_power}, is obtained from the UCSD solar resource web application \cite{UCSD:solar}. We generate the mobile traffic rates based on the mobile traffic pattern \cite{Peng:2011:TPS}. The average traffic rate at a location in the area is shown in Fig. \ref{fig:average_traffic}. We assume the green power percentage $\alpha$ is the same in all time slots. The MBSs' provisioning cost weights are randomly selected.

\begin{figure}[!ht]
\centering
	\includegraphics[scale=0.3]{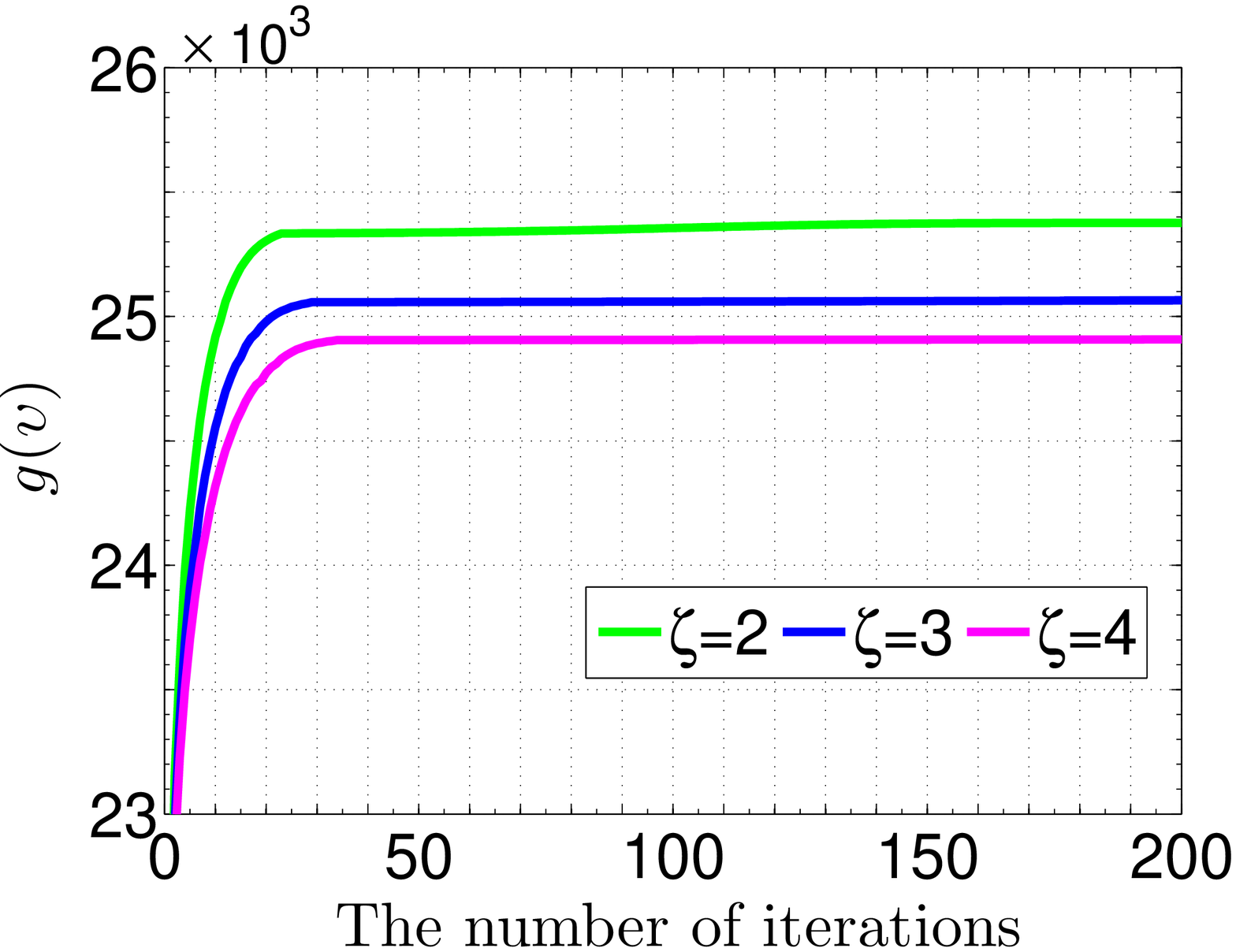}
	\caption{The converges of the provision cost aware load balancing ($\alpha=1$).}
	\label{fig:sim_1_converge}
\end{figure}

Fig. \ref{fig:sim_1_converge} shows the convergence of the provision cost aware (PCA) load balancing algorithm. The x-axis is the number of iterations between the traffic redirect algorithm and the traffic load update algorithm while the x-axis is the value of the dual function. After about fifty iterations, the value of the dual function converges. When $\zeta$ increases, the dual function converges to a smaller value. When the dual function converges to a smaller value, the primal function also has a smaller value. It indicates that increasing $\zeta$ reduces the provisioning costs. This is because when $\zeta$ increases, the network can tolerate additional traffic latency. As a result, more traffic load will be redirected to SBSs, thus reducing the power consumption of MBSs.

\begin{figure*}[!ht]
\centering
  \begin{subfigure}[t]{0.5\textwidth}
 	\includegraphics[scale=0.3]{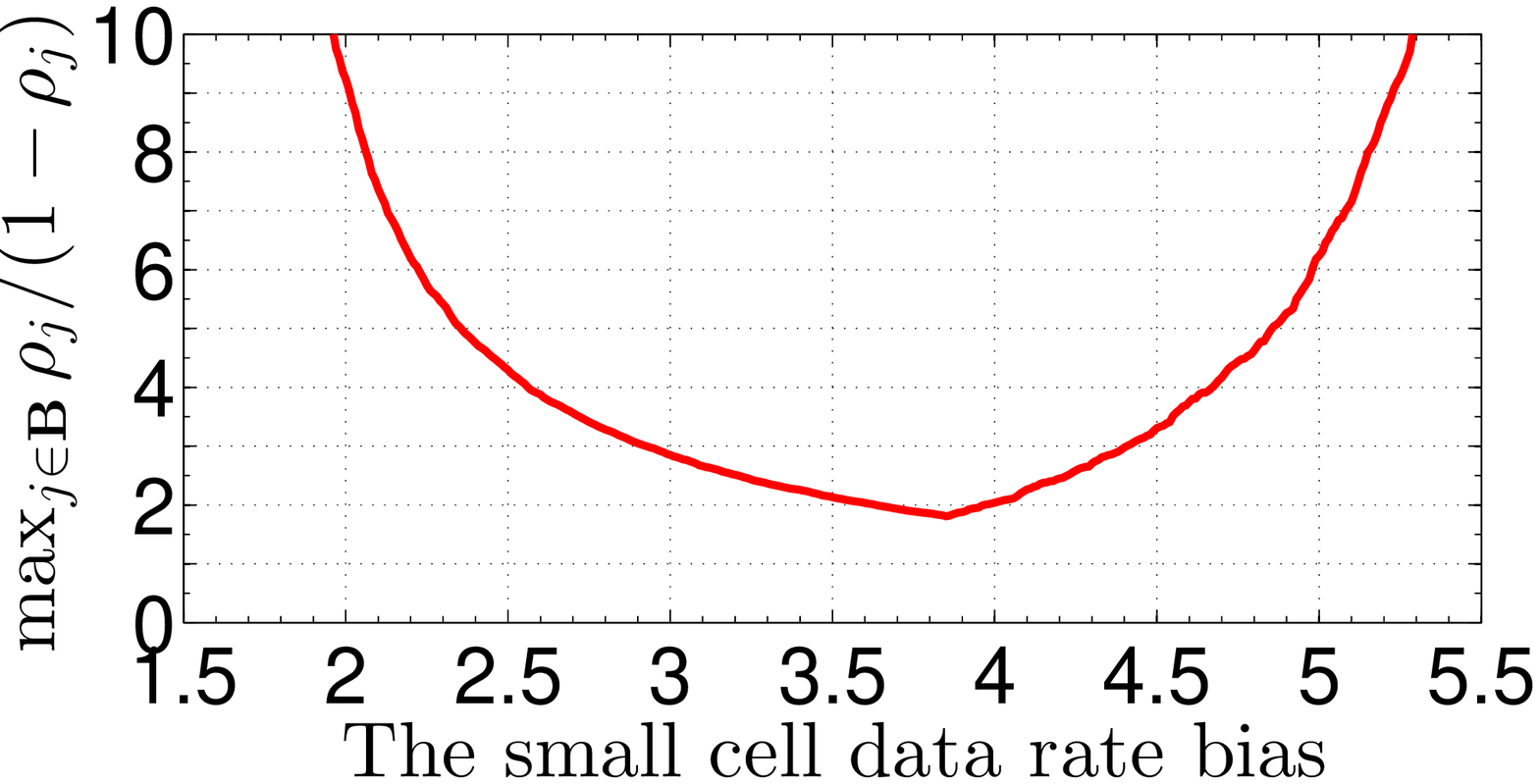}
	\caption{The maximum latency ratio.}
	\label{fig:sim_2_bias_latency}
  \end{subfigure}%
  \begin{subfigure}[t]{0.5\textwidth}
	\includegraphics[scale=0.3]{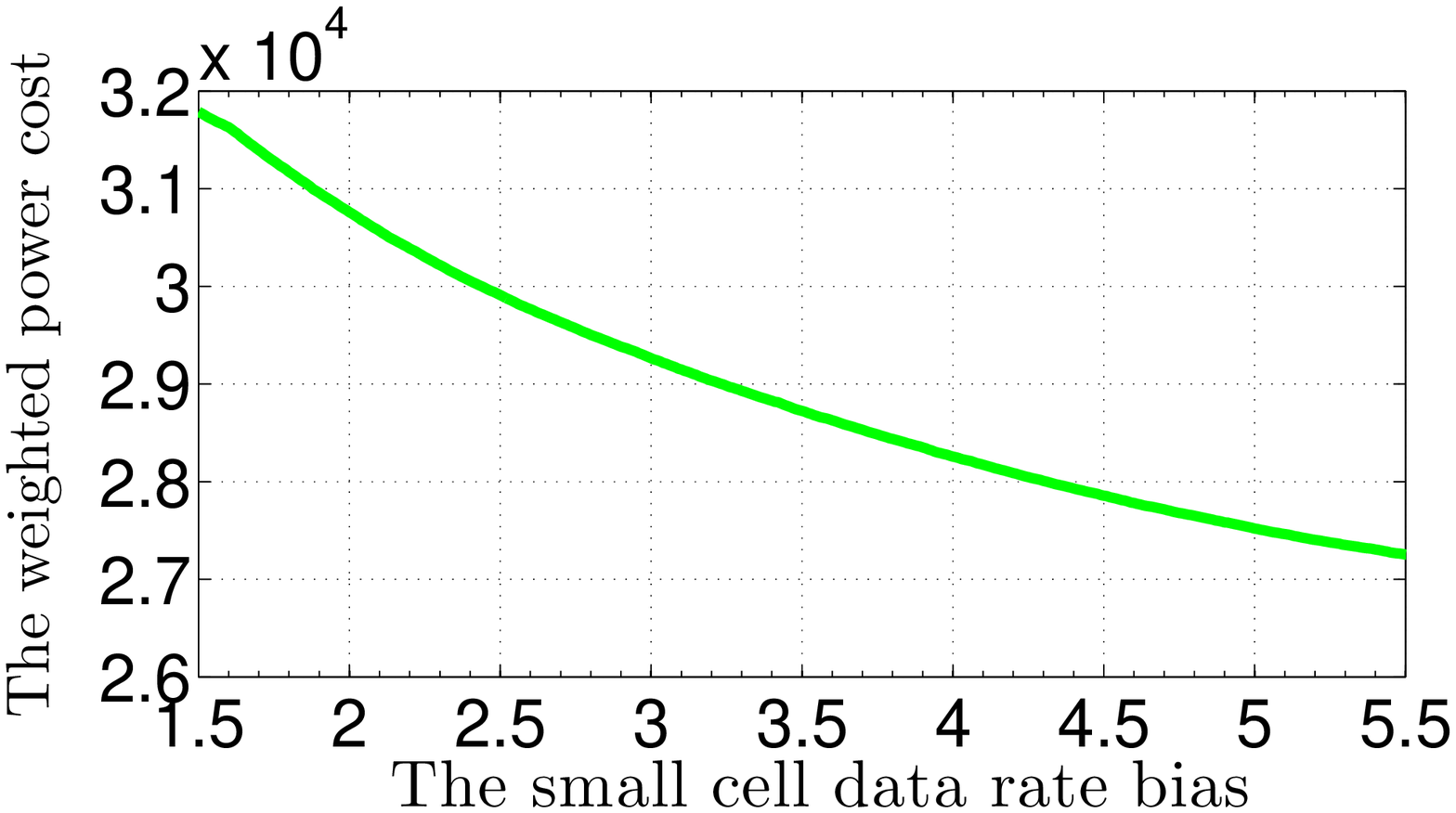}
	\caption{The weighted power cost.}
	\label{fig:sim_2_bias_power}
  \end{subfigure}
\end{figure*}

The traffic load balancing scheme is critical in minimizing the green energy provision cost. We compare the proposed PCA traffic load balancing scheme with the data rate bias (DRB) scheme \cite{Andrews:2014:AOLB} and the traffic latency minimization (LM) scheme.

In the simulation, we consider a two-tier data rate bias scheme and assume that BSs in the same tier have the same cell bias. Since different data rate bias leads to different traffic load balancing results, we first evaluate the two-tier data rate bias scheme and find a proper data rate bias. In the simulation, MBSs are in the first tier while SBSs are in the second tier. The cell bias of a MBS is one. We vary the cell bias of a SBS to investigate the performance of the scheme. In the data rate bias algorithm, a user selects the BS to maximize the biased data rate.
\begin{equation}
\label{eq:cre_alg}
b(x)=\arg\max_{j\in\mathcal{B}^{m}\cup\mathcal{B}^{s}} Z_{j}r_{j}(x).
\end{equation}
Here, $b(x)$ and $Z_{j}$ are the index of the selected BS and the cell bias of the $j$th BS, respectively.

Fig. \ref{fig:sim_2_bias_latency} shows that the maximum traffic latency ratio is a convex function of the data rate bias. The minimum value is achieved when the data rate bias is about 3.9. Meanwhile, Fig. \ref{fig:sim_2_bias_power} shows that the weighted power cost reduces as the data rate bias increases. This is because increasing the data rate bias allows more traffic offloaded to SBSs and thus reduces the power consumption of MBSs. In the simulation, since $\zeta$ equals to two, we set the data rate bias to four for comparing the PCA scheme. Note that when the data rate bias equals to four, the maximum traffic latency ration is around two.

The traffic latency minimization scheme solves the latency aware problem (LAP) as
\begin{eqnarray}
\label{eq:obj_lap}
\min_{\boldsymbol{\rho}} && \sum_{j \in \mathcal{B}^{m}\cup\mathcal{B}^{s}}L(\rho_{j})\\
\label{eq:constraint_lap}
subject\; to: && 0\leq\rho_{j}\leq 1-\epsilon.
\end{eqnarray}

\begin{figure*}
\centering
  \begin{subfigure}[t]{0.5\textwidth}
	\includegraphics[scale=0.3]{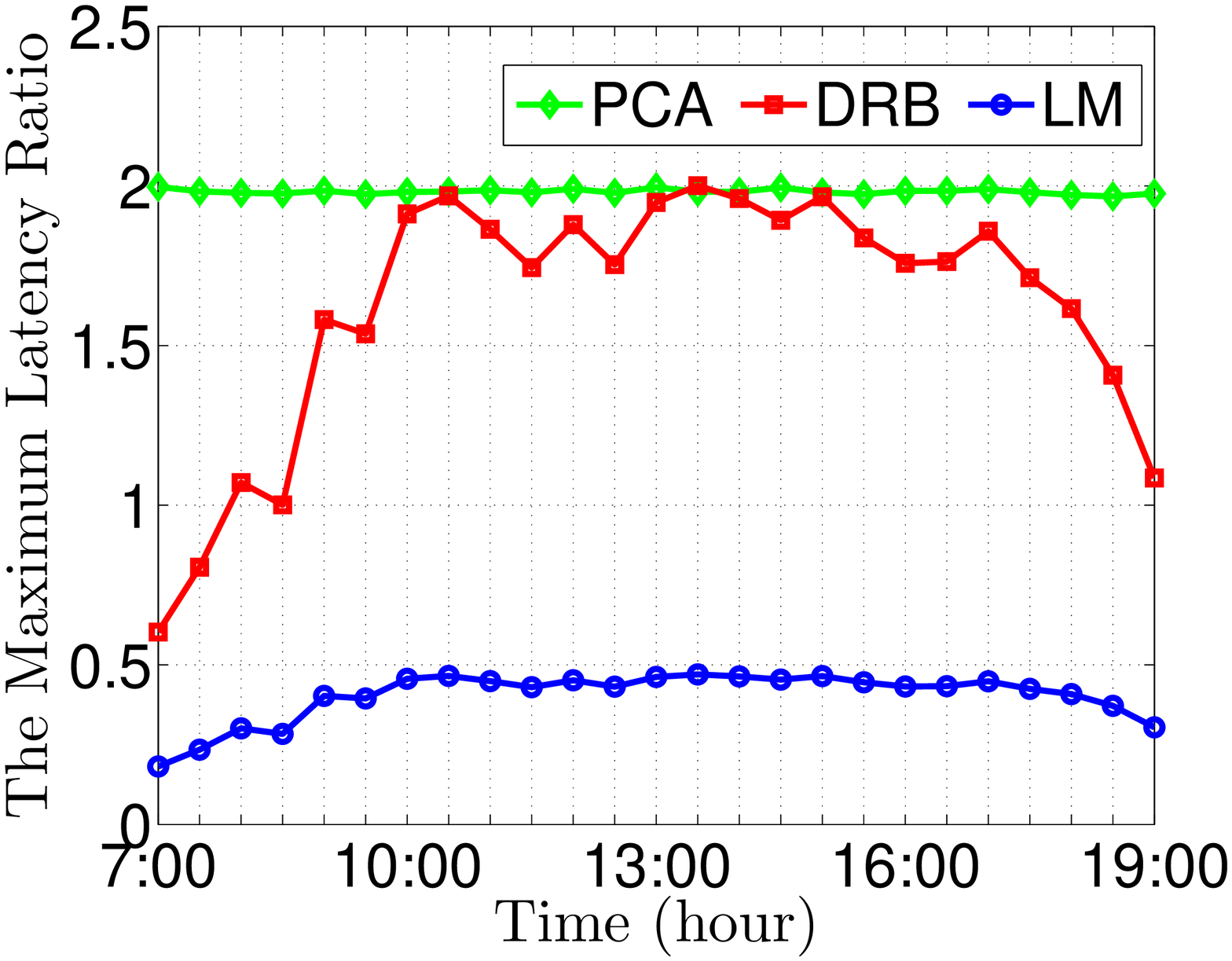}
	\caption{The maximum latency ratio of BSs ($\alpha=1$).}
	\label{fig:sim_3_latency}
  \end{subfigure}%
  \begin{subfigure}[t]{0.5\textwidth}
	\includegraphics[scale=0.3]{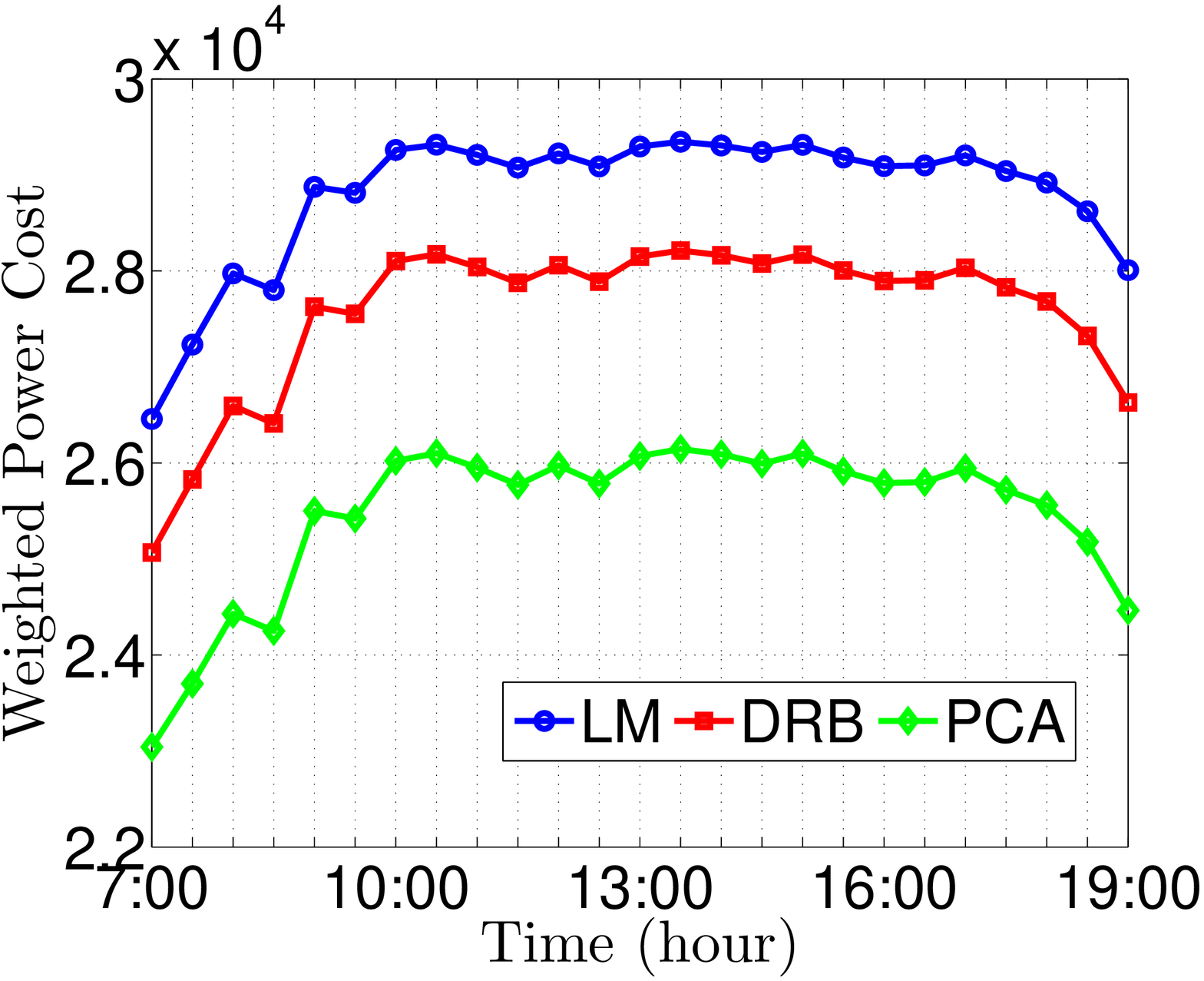}
	\caption{The weighted power cost of the network ($\alpha=1$).}
	\label{fig:sim_3_weighted_power}
  \end{subfigure}
\end{figure*}

Fig. \ref{fig:sim_3_latency} compares the maximum traffic latency ratio of the network under three traffic load balancing schemes. Since $\zeta$ equals to two, the PCA scheme maximizes the traffic offloading while ensuring $\zeta\leq 2$. The maximum traffic latency ratio of the data rate bias scheme depends on traffic intensity of the networks. When traffic intensity is low (high), the DRB scheme achieves a small (large) traffic latency ratio. This is because the data rate bias is fixed and the traffic balancing rule does not change over time slots. The LM schemes has minimal maximum traffic latency ratio.

Fig. \ref{fig:sim_3_weighted_power} shows the weighted power cost of the network under these traffic load balancing schemes. The PCA scheme has the minimal weighted power cost as compared to the other schemes. This is because the PCA scheme offloads as much traffic load as allowed by the QoS constraint to SBSs. In this way, the total power consumption of MBSs is reduced. In addition, the PCA scheme also balances the traffic load among MBSs according to their provision weights. The MBS with a large provision weight serves less traffic loads than the MBS with a small provision weight does. Although the PCA scheme has the highest traffic latency ratio, the QoS of the network is guaranteed. Moreover, the traffic latency ratio of the algorithm can be adjusted by adapting $\zeta$.

\begin{figure}[!ht]
\centering
	\includegraphics[scale=0.3]{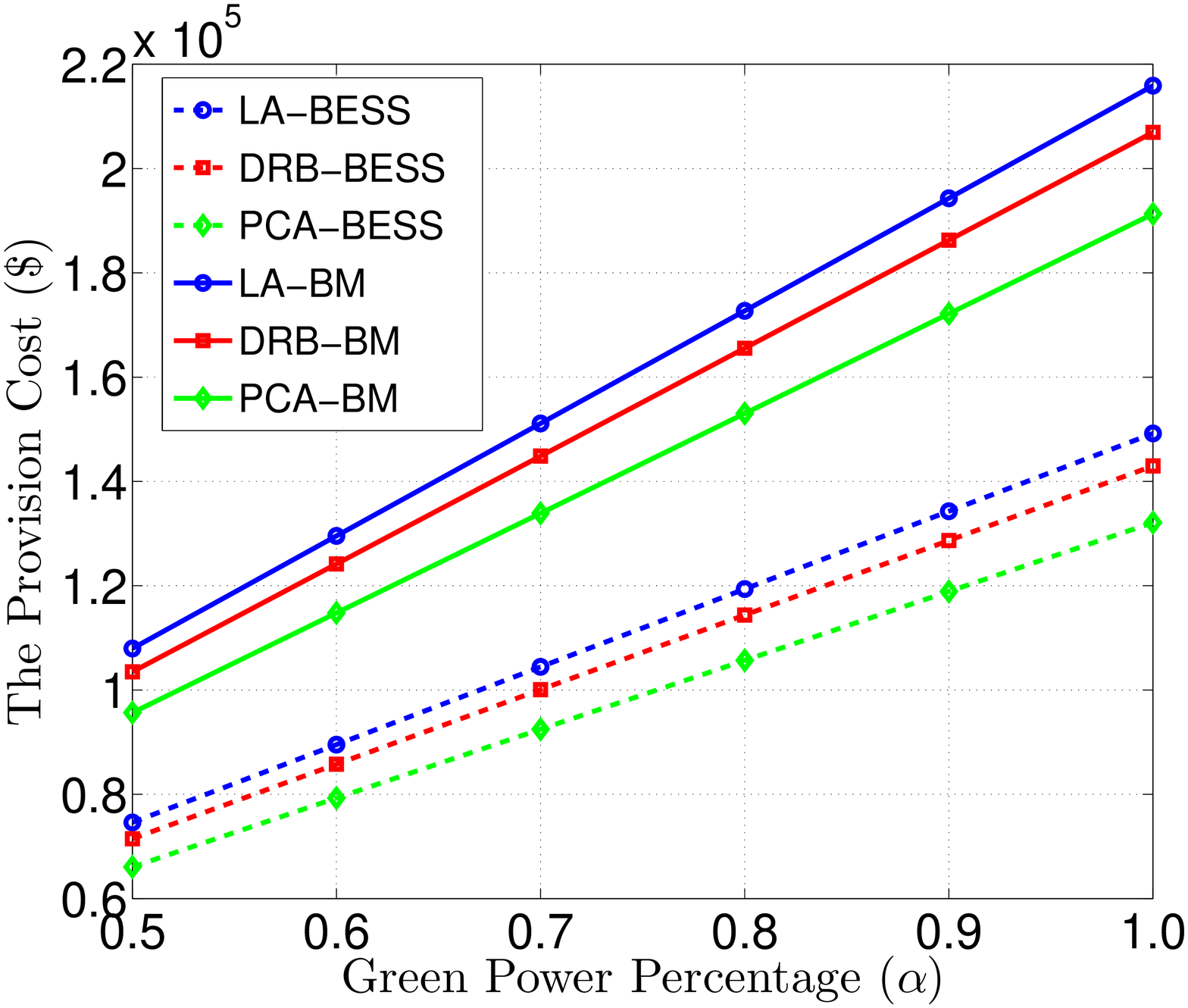}
	\caption{The total provision cost.}
	\label{fig:sim_4_total_cost}
\end{figure}

In Fig. \ref{fig:sim_4_total_cost}, we compare the total green energy provision costs of different solutions. The green energy provision solutions consist of two parts: the traffic load balancing scheme and the green energy system sizing scheme. In the simulation,  the per $m^{2}$ cost of the solar panel and the per watt costs of the battery are \$0.9 and \$0.2, respectively. For the traffic load balancing scheme, we adopt the PCA traffic load balancing scheme, the DRB traffic load balancing scheme, and the LM traffic load balancing scheme. For the green energy system sizing scheme, we compare the proposed BESS algorithm and a battery minimization (BM) sizing algorithm that minimizes the battery capacity. In the simulation, the proposed solution that consists of the PCA load balancing scheme and the BESS algorithm incurs the smallest provision cost. The provision cost of the network increases versus the green energy percentage. This is because a larger green energy percentage indicates more power should be pulled from the green energy generator, thus requiring a more powerful green energy system.

\section{Conclusion}
\label{sec:conclusion}
In this paper, we have proposed a green energy provisioning solution to minimize the CAPEX of deploying the green energy system for MBSs in HetNets while achieving the targeted QoS requirement. The green energy provisioning solution consists of the provision cost aware traffic load balancing algorithm and the binary energy system sizing algorithm. Given the traffic load, the provision cost aware traffic load balancing algorithm balances the traffic load among BSs based on the QoS requirements and the provision costs. The energy consumption of MBSs are calculated based on their traffic loads. The BESS algorithm optimizes the solar panel sizes and battery capacities for individual MBSs based on their power consumption. The simulation results have validated the performance and the viability of the proposed solution.

Although various traffic load balancing algorithms may be adopted in HetNets, the energy provision solution based on provision cost aware traffic load balancing provides a lower bound on the provision costs of the green energy systems. The results provide guidance for the network planning and deployments from the perspective of provisioning green energy in cellular networks..

\bibliographystyle{IEEEtran}
\bibliography{mybib}
\end{document}